\newcounter{listing}
\newenvironment{listing}[1][htb]
  {
   \let\c@algocf\c@listing
   \begin{algorithm}[#1]%
  }{\end{algorithm}}
\xpatchcmd{\@todo}{\setkeys{todonotes}{#1}}{\setkeys{todonotes}{inline,#1}}{}{}
\newcommand{\crcell}[2][c]{
  \begin{tabular}[#1]{@{}c@{}}#2\end{tabular}}
\renewcommand{\cite}{\shortcite}
\newcommand{\citet}{\shortciteA}
\DeclareMathOperator*{\argmax}{arg\,max}
\DeclareMathOperator*{\Exp}{\mathbb{E}}
\DeclareMathOperator*{\bigtimes}{\times}
\newcommand{\eu}{\bar{u}}
\newcommand{\bru}{\eu^{BR}}
\newcommand{\loss}{l}
\newcommand{\Unif}{\mathcal{U}}
\renewcommand{\epsilon}{\varepsilon}
\newcommand*\diff{\mathop{}\!\mathrm{d}}
\newtheorem{remark}{Remark}
\newtheorem{theorem}{Theorem}
\newtheorem{lemma}{Lemma}
\newtheorem{assumption}{Assumption}
\newtheorem{definition}{Definition}
\newcommand{\sm}{\textrm{-}}
\newcommand{\smi}{{\sm i}}
\newcommand{\LLG}{{LLG}\xspace}
\newcommand{\MMLLLLGG}{{multi-minded LLLLGG}\xspace}
\newcommand{\numSettings}                  {\customlabeloutput{16}}
\newcommand{\numRuns}                       {\customlabeloutput{50}}
\newcommand{\numSettingsTimesRuns}          {\customlabeloutput{800}}
\newcommand{\targetEpsilon}                {\customlabeloutput{0.00001}}
\newcommand{\dampeningFactorNaive}         {\customlabeloutput{0.5}}
\newcommand{\maxSamplesNaive}              {\customlabeloutput{200,000}}
\newcommand{\numMCSamplesQuasi}            {\customlabeloutput{200,000}}
\newcommand{\numMCSamplesCommon}            {\customlabeloutput{10,000}}
\newcommand{\pctMCSamplesCommonToBaseline} {\customlabeloutput{5\%}}
\newcommand{\factorMCSamplesCommonToBaseline} {\customlabeloutput{20}}
\newcommand{\numCtrlPtsNotAdaptive}        {\customlabeloutput{160}}
\newcommand{\numCtrlPtsAdaptive}        {\customlabeloutput{40}}
\newcommand{\numCtrlPtsNaiveStopping}      {\customlabeloutput{1,000}}
\newcommand{\numSamplesMSDVerif}           {\customlabeloutput{200,000}}
\begin{document}


\jairheading{1}{1993}{1-15}{6/91}{9/91}
\thispagestyle{empty} 

\newcounter{tempfootnote}
\setcounter{tempfootnote}{\value{footnote}}
\setcounter{footnote}{0}
\renewcommand*{\thefootnote}{\fnsymbol{footnote}}

\title{Computing Bayes-Nash Equilibria in \\ Combinatorial Auctions with Verification\footnotemark}

\footnotetext[1]{This paper is a significantly extended version of \citet{bosshard2017fastBNE}, which was published in the conference proceedings of IJCAI'17.}

\setcounter{footnote}{\value{tempfootnote}}
\renewcommand{\thefootnote}{\arabic{footnote}}

\author{\name Vitor Bosshard \email bosshard@ifi.uzh.ch \\
       \addr Department of Informatics, University of Zurich\\
       \AND
       \name Benedikt B\"unz \email buenz@cs.stanford.edu \\
       \addr Department of Computer Science, Stanford University\\
       \AND
       \name Benjamin Lubin \email blubin@bu.edu \\
       \addr Questrom School of Business, Boston University\\
       \AND
       \name Sven Seuken \email seuken@ifi.uzh.ch \\
       \addr Department of Informatics, University of Zurich}

\maketitle





\begin{abstract}
\noindent
We present a new algorithm for computing pure-strategy $\epsilon$-Bayes-Nash equilibria ($\epsilon$-BNEs) in combinatorial auctions with continuous value and action spaces. An essential innovation of our algorithm is to separate the algorithm's \emph{search phase} (for finding the $\epsilon$-BNE) from the \emph{verification phase} (for computing the $\epsilon$). Using this approach, we obtain an algorithm that is both very fast and provides theoretical guarantees on the $\epsilon$ it finds. Our main technical contribution is a verification method which allows us to upper bound the $\epsilon$ across the whole continuous value space without making assumptions about the mechanism. Using our algorithm, we can now compute $\epsilon$-BNEs in multi-minded domains that are significantly more complex than what was previously possible to solve. We release our code under an open-source license to enable researchers to perform algorithmic analyses of auctions, to enable bidders to analyze different strategies, and to facilitate many other applications.

\end{abstract}
\section{Introduction}

Combinatorial auctions (CA) are used to allocate multiple, indivisible goods to multiple bidders.
CAs allow bidders to express complex preferences on the space of all bundles of goods, taking into account that goods can be complements or substitutes \cite{CramtonEtAl2006CombAuctions}.
They have found widespread use in practice, including for the sale of radio spectrum licenses \cite{Cramton2013SpectrumAuctionDesign}, for the procurement of industrial goods \cite{Sandholm2013largescale}, and for the allocation of TV ad slots \cite{goetzendorff2014core}.

Unfortunately, the strategyproof VCG mechanism \cite{vickrey1961counterspeculation,clarke1971multipart,groves1973incentives} has several serious flaws when applied to the CA setting: most notably, it can lead to very low or
even zero revenues despite high competition for the goods \cite{ausubel2006lovely}.
Furthermore, it leaves incentives for collusion \cite{DayMilgrom2008CoreSelectPackageAuctions}.
For these reasons, many CAs conducted in practice do not use VCG, instead opting for alternative payment rules, such as first-price payments (also known as ``pay-as-bid'') or other payment rules, typically from the class of core-selecting rules, which are designed to make the winners' payments high enough to guarantee envy-freeness \cite{DayMilgrom2008CoreSelectPackageAuctions}.
These alternative mechanisms are not strategyproof in general settings, and the behavior of bidders under them is not well understood.
If we want to predict auction outcomes in terms of desirable properties (such as incentives, revenue, efficiency), we must therefore study them \emph{in equilibrium} instead of \emph{at truth}.
As a first step, this requires us to choose a suitable equilibrium concept.
\subsection{Equilibria in CAs}

In the full information setting, significant theoretical work has been done towards characterizing the equilibria of CAs.
For first-price payments, the set of Nash equilibria has been fully characterized (\cite{bernheim1986menu}; see \cite[Chapter 8.2]{milgrom2004putting} for a modern treatment). \citet{DayRaghavan2007FairPayments} showed that an analogous characterization holds for any payment rule that always selects bidder-Pareto-optimal core payments.
While the full information Nash equilibrium (NE) may be a good approximation of bidder behavior in some settings (e.g., repeated auctions where bidders can reasonably be assumed to know others' values), it has several issues:
there is a high multiplicity of equilibria, and each equilibrium must be supported by very precise bids on losing packages.
The latter issue is especially problematic, since in a full information setting, a losing bidder would know a priori that he will not win any items, and thus has no reason to participate in the auction at all.
However, this bidder's losing bid may be needed to keep the winners in equilibrium.
For more details on the issues of equilibrium selection and stability, see the discussions in \citet{bernheim1986menu} and \citet{DayRaghavan2007FairPayments}.

In addition to these theoretical issues, many real-world high-stakes auctions such as spectrum auctions are only conducted once, and bidders work hard to keep their private information secret, motivating the study of \emph{incomplete information} settings.
In such a setting, private information is explicitly modelled by assuming that each bidder knows his own valuation but only has a prior belief (i.e., a distribution) over the valuations of others.
This leads to the solution concept of the \emph{Bayes-Nash Equilibrium (BNE)}, where bidders maximize their expected utility over many possible auction outcomes, weighted according to their beliefs.

Some analytical research into BNEs already exists.
Non-combinatorial single-item auctions have been studied extensively  \cite{klemperer1999auction}, but comparatively little is known about multi-item auctions, as the difficulty of finding BNEs by hand increases markedly, requiring the solution of challenging differential equations.
For this reason, only a few analytical results exist in small settings, most notably the Local-Local-Global (LLG) domain with two goods and three bidders (which we define in Section~\ref{sec:llg}).
\citet{Goeree2013OnTheImpossibilityOfCoreSelectingAuctions}
as well as \citet{ausubel2020core} have
independently derived the analytical BNE of the VCG-nearest or ``quadratic'' rule, which is commonly used in practice
\cite{DayCramton2012Quadratic}.
Furthermore, \citet{AusubelBaranov2013CoreOldVersion} have also derived analytical BNEs of three other payment rules.
For first-price payments, \citet{baranov2010exposure} provides some necessary properties of BNEs, but does not fully characterize them.

As known analytical methods are not amenable to larger settings, the only feasible approach to finding BNEs in such settings  is through algorithmic methods.  However, one challenge we face when designing algorithms for finding BNEs in CAs is that equilibria are not known to exist in general, but only in specific settings.\footnote{The general CA problem has a structure much richer than what is covered by the single crossing condition \cite{Athey2001SingleCrossingPureStrategy} or other approaches for games with discontinuous payoffs (see, e.g., \citet{mclennan2011games}). There have been some results for restricted classes of auctions, such as uniform price auctions \cite{mcadams2006monotone}, double auctions \cite{jackson2005existence}, and games with discrete actions and continuous single-dimensional types \cite{Rabinovich2013ComputingBNEs}.}
Fortunately, every strategy profile is an $\epsilon$-BNE for some $\epsilon$.
Thus, finding an $\epsilon$-BNE with $\epsilon$ being as small as possible is a well-defined problem, and an algorithm for solving this problem can be an important tool for auction designers to analyze existing mechanisms and to design new ones.
%

\subsection{Prior Algorithmic Work on Computing BNEs}

Computer scientists have long worked on algorithms for computing
equilibria in non-cooperative games. The \emph{Gambit} software
package provides a number of algorithms to find NEs and BNEs
\cite{McKelvey1996ComputationofEquilibria,McKelvey2016Gambit},
but only for finite games (with finite type and action
spaces). Solving auction games with even a modest number of types (valuations) and actions quickly becomes infeasible with these general solvers; therefore, infinite games can only be modeled with significant loss of fidelity.
This is why researchers have turned towards developing special-purpose algorithms for computing BNEs in CAs.
One important class of BNE algorithms is based on \emph{iterated best
  response}. The algorithms proposed by
\citet{reeves2004computing}, \citet{vorobeychik2008stochastic} and
\citet{Rabinovich2013ComputingBNEs} belong to this class.
To keep the computation manageable, all three
algorithms \emph{simplify} the strategy space (using piecewise linear
strategies, multiplicative shading strategies, or a finite
set of actions, respectively) before \emph{solving} the simplified auction game.

A fundamental limitation of this ``simplify, then solve'' approach is that the $\varepsilon$-BNE computed is only valid within the space over which the algorithm searches for best responses.
If this issue is not handled carefully, it can lead to what we call the \emph{false precision problem}: the modelling choices that were made to speed up the computation end up distorting the auction game in meaningful ways, and the equilibrium that is calculated might not be as good as the algorithm reports.\footnote{Note that \citet{vorobeychik2008stochastic}[Sec. 7.4] correctly state this  limitation of their algorithm. \citet{Rabinovich2013ComputingBNEs}
  also handle this issue correctly by only claiming to find the BNE in
  the ``game with the restricted strategy space.''. \citet{reeves2004computing} restrict themselves to a class of auctions where the best response is guaranteed to lie in the subset of the strategy space they search through.}
To illustrate this point, consider the following simple but striking thought experiment: We search for an $\epsilon$-BNE in a
CA, but restrict the bidders' actions to bidding zero on all bundles of items. Any iterated best response algorithm
will immediately find an $\varepsilon$-BNE with $\varepsilon=0$,
as there is no beneficial deviation. Obviously, this $0$-BNE only
``survives'' because bidders are artificially prevented from meaningfully participating in the auction.

Restricting the action space of a game in order to make equilibrium computation easier is known as action abstraction \cite{sandholm2015abstraction}.
For some types of finite games, abstraction methods have been developed that guarantee that an equilibrium of the abstracted game can be translated into an equilibrium of the original game, with a bound on how much the coarseness of the abstraction affects the solution quality \cite{sandholm2012lossy}.
Unfortunately, no such methods exist for infinite games.
A recent algorithm proposed by \citet{bosshard2018nondecreasing} could be interpreted as a kind of action abstraction for CAs, but it is only applicable to CAs where the payment rule is non-decreasing (e.g., first-price auctions).

\subsection{Overview of our Contribution}

In this paper, we develop a fast algorithm for computing pure-strategy $\epsilon$-BNEs in CAs, without sacrificing accuracy in the formalization of the auction game or the equilibrium computed within this game.
For this purpose, we set up an algorithm framework  (Section~\ref{sec:bneAlgorithmFramework}) that protects against the false precision problem.
This framework splits the task of computing an $\epsilon$-BNE into a a \emph{search phase} and a \emph{verification phase}.
%

In the search phase, our goal is to find an $\epsilon$-BNE as quickly as possible.
For this, we design an algorithm based on iterated best response that is highly optimized to CAs  (Section~\ref{sec:compBestResponses}).
We use many numerical techniques to cut down on computation time.
The result of the search phase is a BNE  candidate: a strategy profile likely to offer only small incentives to deviate for any bidder, but for which the algorithm, at this time, only has an estimate of the $\epsilon$ (i.e., the incentives to deviate).

In the verification phase, our goal is to find an $\epsilon$ such that the equilibrium candidate is in fact an $\epsilon$-BNE.
For this, it is important to compare the quality of the equilibrium candidate against the best alternative strategies available in the \textit{full} strategy space, without the restrictions imposed during search.
The main technical contribution of our paper is a verification procedure that computes a theoretical upper bound on $\epsilon$, taking into account the full strategy space of the auction game, i.e., the entire continuum of valuations and actions (Section~\ref{sec:provenErrorBound}). It is surprising that it is even possible to derive such a theoretical bound, given that any algorithm can only evaluate a finite number of individual valuations and actions, while the bound requires reasoning over the continuous value and action spaces.
To the best of our knowledge, our approach is the first to achieve such guarantees for infinite games without restricting the strategy space.
Our results require some mild theoretical conditions to hold (most notably, independent distributions of bidders' valuations), so we also provide a second verification procedure that robustly estimates $\epsilon$ in arbitrary CA settings.
This alternative approach ensures that our algorithm is fully general. While it only produces an estimated $\epsilon$, we also show experimentally that, with sufficient computation time, the estimated $\epsilon$ and the upper bound on $\epsilon$ converge towards each other.

We validate our approach by running a series of experiments in the \LLG domain with two items and three bidders.
Our experiments show that the techniques we use significantly speed up our algorithm's runtime and that our algorithm converges consistently despite the use of randomness. Moving beyond single-dimensional domains like \LLG, we also discuss the difficulties of scaling any BNE algorithm to high-dimensional auctions (Section~\ref{sec:HigherDims}).
We introduce the new \MMLLLLGG domain with eight goods and six bidders, which is much larger than any domain that previous algorithms have been able to tackle.
In this domain, we find accurate $\epsilon$-BNEs for both the VCG-nearest and first-price payment rules, which demonstrates the scaling capabilities of our algorithm, and sets a benchmark for future work on BNE algorithms.

There are multiple use cases for our BNE algorithm. First, researchers may use it for the purpose of auction design, e.g., when automatically searching for optimal payment rules within a given design space \cite{Lubin2015AMMAAbstract,lubin2018designing}. Other researchers may find our algorithm useful when analyzing specific aspects of CAs, like the impact of reserve prices \cite{DayCramton2012Quadratic}, incentives for overbidding in CAs \cite{beck2013incentives}, or as a supporting tool for finding and validating new analytical results \cite{baranov2010exposure}. Actual bidders participating in an auction could use our algorithm to optimize their bidding, e.g., by analyzing the effect of different strategies under a given CA design. Further, bidders that are contemplating whether to participate in an auction could use our algorithm to better understand their economic position in a given competitive landscape. To enable all of these use cases, we explain how to use our software  and describe its main features (Section~\ref{sec:software}), and we release our code under
an open-source license at \texttt{https://github.com/marketdesignresearch/CA-BNE}.

\section{Preliminaries}
\label{sec:formal_model}
\subsection{Formal Model}
\subsubsection*{Combinatorial Auctions}

A combinatorial auction (CA)  is used to sell a set $M = \{1, 2, \ldots, m \}$ of goods to a set $N = \{1, 2, \ldots, n \}$ of bidders.
Each bidder $i$ has a  value $v_i(K) \in \mathbb{R}_{\geq 0}$ for each bundle of goods $K \in 2^M$. We assume that these values are normalized such that $v_i(\emptyset)= 0$.
It is often the case that a bidder is only interested in a small set of $d$ different bundles $\{K_1, \ldots, K_d\}$, having strictly positive value for each of those $d$ bundles, and value $0$ for all others. Therefore, it is convenient to represent the bidder's \emph{valuation} $v_i$ as a point in the $d$-dimensional \emph{value space} $\mathbb{V}_i \subseteq \mathbb{R}_{\geq 0}^{d}$, where $d$ can be between $1$ and $2^m-1$, depending on the setting.


The bidder submits a (possibly non-truthful) bid to the auction. To simplify the exposition, we adopt the XOR bidding language \cite{Nisan2006BiddingLanguagesForCombinatorialAuctions}, but our algorithm and results generalize to other bidding languages. In our model, every bidder is allowed to submit an XOR bid with exactly $r\in \mathbb{N}$ atomic bids (i.e., expressing a value for $r$ different bundles), where the $r$ and the set of atoms is fixed a priori for each bidder.
Note that setting $r = 2^m - 1$ allows bidders to express arbitrary valuations over the entire bundle space. This modeling choice is thus without loss of generality. However, our model also allows for smaller $r$, capturing bidders who focus their attention on a specific set of bundles. Given this, bidder $i$'s bid can be represented by a point $b_i$ in the \emph{action space} $\mathbb{R}_{\geq 0}^{r}$.
We do not assume free disposal, and therefore the value for all bundles for which a bidder does not submit a bid is implicitly zero.
The bid profile $b = (b_1, \ldots, b_n)$ is the vector of all bids.
The bid profile of every bidder except $i$ is denoted $b_{\smi}$.

The CA has an allocation rule $X$ assigning bundle $X_i(b)$ to bidder $i$, in such a way that the allocation it produces is \emph{feasible},  i.e., $\forall i,j \in N: X_i(b) \cap X_j(b) = \emptyset$.\footnote{In much of the literature, it is typically assumed that the allocation rule is also \emph{efficient} (i.e., maximizes the sum of bidders' reported values). Of course, our model covers this case, but it is much more general, as it can handle \emph{arbitrary} allocation rules, including approximately efficient ones, or rules targeting other objectives such as revenue.} The CA also has a payment rule $p$ which is a function assigning a payment $p_i(b) \in \mathbb{R}_{\geq 0}$ to each bidder.
We let 
$u_i(v_i, X(b), p(b))$ denote bidder $i$'s utility for an auction outcome, given his own valuation $v_i$, an allocation $X(b)$, and payments $p(b)$.\footnote{To simplify the language, we always use he/his when referring to a bidder.}
Since the allocation and payment rules $X$ and $p$ are always fixed in each auction, bidder $i$'s utility only depends on the bid profile $b$, so we abbreviate the utility as $u_i(v_i, b)$, or equivalently $u_i(v_i, b_i, b_\smi)$.

\subsubsection*{CAs as Bayesian Games}
\label{sec:bayesiangames}

We model the process of bidding in a CA as a Bayesian game.\footnote{In the game theory literature, a bidder would be called a \emph{player}, and his valuation $v_i$ would be called his \emph{type}.}
Each bidder knows his own valuation $v_i$, but he only has probabilistic information (i.e., a prior) over each other bidder $j$'s valuation $v_j$. {\color{black}
This information is represented by the random variable $V_j$ drawn from the distribution $\mathcal{V}_j$ with support $\mathbb{V}_j$ (i.e., all valuations in $\mathbb{V}_j$ have strictly positive probability under $\mathcal{V}_j$). The joint prior $\mathcal{V} = (\mathcal{V}_1, \ldots, \mathcal{V}_n)$ is common knowledge and consistent among all bidders.
}

{\color{black}
The goal of bidder $i$ is to maximize his utility $u_i(v_i, b_i, b_\smi)$ {in expectation} over the distribution of bids $b_\smi$ of all other bidders.
To capture a bidder's belief about $b_\smi$, we model every bidder $j$ as having a strategy $s_j: \mathbb{R}_{\geq 0}^d \mapsto \mathbb{R}_{\geq 0}^r$, which is a function mapping all of his possible valuations to bids.\footnote{\color{black} With this definition, we assume that all bidders' strategies are pure and not mixed. This assumption is quite natural from a behavioral perspective. Furthermore, pure-strategy $\epsilon$-BNEs (introduced further below) always exist for some $\epsilon >0$. This, together with our verification method (Section~\ref{sec:provenErrorBound}), ensures that the $\epsilon$-BNE-finding problem is well-defined. Nevertheless, allowing for mixed strategies would be an interesting avenue for future work, as it models different economic behavior and could even improve the accuracy and convergence of our algorithm.}
From the point of view of bidder $i$, the bids $b_\smi$ are thus induced by the valuations $v_\smi$ through the strategies: $b_\smi := s_\smi(v_\smi)$.
This allows us to introduce the \emph{expected utility} $\eu_i$, defined as
}
%
%
%

%
%
\begin{equation}
    \eu_i(v_i, b_i) := \Exp_{V_{\smi}\sim \mathcal{V}_{\smi} | V_i = v_i} \left[ u_i(v_i, b_i, s_\smi(V_\smi)) \right],
    \label{eq:util}
\end{equation}
where $\mathcal{V}_{\smi} | V_i = v_i$ is the conditional distribution of valuations of all other bidders, which can depend on the realization of $V_i$ (when the distributions of valuations are not independent).

The highest possible expected utility that can be achieved with any bid is the \emph{best response utility}, given by
\begin{equation}
    \bru_i(v_i) = \sup_{b_i' \in \mathbb{R}_{\geq 0}^{r}} \eu_i(v_i, b_i').
    \label{eq:brutil}
\end{equation}
Note that we take the supremum over bids instead of the maximum, because the maximum might not exist due to discontinuities in the utility function.\footnote{Consider, e.g., a single-item first-price auction with complete information. If opponents bid a maximum amount of $x$, the best response is often to outbid them by a small amount, i.e., to bid $\lim_{\delta \to 0} x + \delta$. Analogues of this situation can arise even in incomplete information settings, where a discontinuity arises due to $i$'s bid crossing over a threshold where his probability of winning a certain bundle jumps by a discrete amount. Such thresholds are caused by point masses in the distribution of $s_\smi(V_\smi)$, which can occur even if the distribution of $V_\smi$ itself is smooth, e.g., when strategies have flat segments.}
Whenever a bidder $i$ submits a bid $b_i$ that is not optimal he leaves a certain amount of utility ``on the table.''
We call this quantity the \emph{utility loss}, given by
\begin{equation}
    \loss_i(v_i, b_i) := \bru_i(v_i) - \eu_i(v_i, b_i).
    \label{eq:loss}
\end{equation}

Bidders are in an $\epsilon$-Bayes-Nash equilibrium when the utility loss is smaller than $\epsilon$ for all possible valuations of all bidders, i.e., no bidder has a profitable deviation from the equilibrium netting him more than $\epsilon$ utility in expectation:

\begin{definition}
    An \textbf{$\epsilon$-Bayes-Nash equilibrium} ($\epsilon$-BNE) is a strategy profile $s^*$ such that
    $$\forall i \in N, \forall v_i \in \mathbb{V}_i: \loss_i(v_i, s_i^*(v_i)) \leq \epsilon.$$
\end{definition}
We take the $\epsilon$-BNE as our solution concept because we use numerical algorithms with limited precision to find the BNEs. Thus, when we \emph{solve} a CA, we mean that we find an $\epsilon$-BNE, where $\epsilon$ is a suitably small constant fixed a priori.

\begin{remark}
Throughout the paper, we sometimes refer to the ``true'' $\epsilon$ of a strategy profile $s^*$. By this we mean the \emph{smallest} $\epsilon$ such that $s^*$ is an $\epsilon$-BNE.
\end{remark}
\subsection{The \LLG Domain and Straightforward Bundle Bidding}
\label{sec:llg}

In this paper, we study the performance of our BNE algorithm, first in a small domain,
where analytical results are available, and later in a novel larger
domain (Section~\ref{sec:HigherDims}).
For the former, we use the widely-studied Local-Local-Global (\LLG)
domain \cite{ausubel2006lovely}.
\LLG is one of the smallest examples of an auction where combinatorial interactions between bidders arise.
There are three bidders, with bidders 1 and 2 being local, interested in two different single goods, and bidder 3 being global, interested in the package of both goods.

This domain has been widely studied, and analytical BNE results for it are available under several mechanisms.
\citet{AusubelBaranov2013CoreOldVersion}
study the case where the global bidder's valuation is drawn from $\Unif[0,2]$, while the local bidders' valuations are drawn from $\Unif[0,1]$, with cumulative distribution function $F(v) = v^\alpha$.
Furthermore, the local bidders' valuations are perfectly correlated with probability $\gamma$, and independent otherwise.
{\color{black}
Within this framing, they derive analytical BNEs for four different mechanisms. All of the mechanisms use the efficient allocation rule but each of them uses a different core-selecting payment rule (VCG-nearest, Nearest-bid, Proxy and Proportional).}
These BNEs are unique under the assumption that local bidder strategies are symmetric.
Adopting their
results as our benchmark, we assemble a set of \numSettings \textit{auction
settings} to be used as a test suite: four payment rules each applied to four domains ($\alpha \in
\{1,2\} \times \gamma \in \{0, 0.5\}$).
Note that our selection of settings covers all payment rules for which analytical results are known and all interesting regions in the $(\alpha, \gamma)$-parameter space for which the analytical results have distinct functional forms \cite{AusubelBaranov2013CoreOldVersion}.\footnote{We excluded some simple settings where the analytical solution is just a multiplicative shade.} Importantly, we did not exclude any settings for modeling or algorithmic tractability reasons.

In the literature on \LLG, it is often assumed (sometimes implicitly) that the local bidders can only bid on the good they are directly interested in \cite{Goeree2013OnTheImpossibilityOfCoreSelectingAuctions,ausubel2020core}.\footnote{\color{black}There are two noteworthy exceptions. \citet{beck2013incentives}  analyzed a specific payment rule in LLG with $r=3$, thus allowing bids on all non-empty bundles in this case. More recently, \citet{Bosshard2020Cost}  studied bidders' incentives for exploiting an exponentially large action space, requiring $r=2^m$.}
This implements what we call ``straightforward bundle bidding:'' The bidder submits an atomic bid on each bundle that has strictly positive \emph{marginal value}, i.e., where removing any part of the bundle would strictly decrease the value.
Formally:
\begin{definition}
    An XOR bid $b_i$ is a \emph{straightforward bundle bid} if the bid contains one atomic bid for each bundle in the set
    \begin{equation}
        \{ K \subseteq M \enskip | \enskip \forall K' \subset K : v_i(K') < v_i(K) \}
        \label{eq:straightforward}
    \end{equation}
    and no other atomic bids.
\end{definition}
{\color{black}
The interpretation of the set (\ref{eq:straightforward}) is simple: To declare the valuation $v_i$ truthfully, the bid $b_i$ must include an atomic bid for each of the subsets in (\ref{eq:straightforward}), and no other atomic bids are necessary.
A straightforward bundle bid is thus the smallest bid capable of expressing a bidder's true valuation.}
For our analysis of LLG, we follow the majority of prior work and also assume straightforward bundle bidding. Because in LLG, every bidder is single-minded (i.e., only interested in one bundle), we thus set  $r=1$. Given that all of our payment rules are minimum-revenue core-selecting, we know that it is a dominant strategy for the global bidder to be truthful \cite{beck2013incentives}. Thus, in LLG, computing the $\epsilon$-BNE only involves computing the strategies for the local bidders. Furthermore, to match the analytical results of
\citet{AusubelBaranov2013CoreOldVersion},
we only consider symmetric equilibria in this domain (though this simplification is not
essential to our algorithm).
Thus, the whole strategy profile in LLG can be described by a single function  $s_{local}: [0,1] \mapsto \mathbb{R}_{\geq 0}$.
\section{BNE Algorithm Framework}
\label{sec:bneAlgorithmFramework}

\begin{algorithm}[tb]
\SetAlgoLined
\DontPrintSemicolon
\SetKwInOut{KwInput}{input}
\SetKwInOut{KwOutput}{output}
\SetKwInOut{KwParameters}{parameters}

\KwInput{Mechanism $\mathcal{M}$, Distribution $\mathcal{V}$ of bidder valuations}
\KwOutput{$\varepsilon$-BNE strategy profile}
\SetKwFunction{convergedInner}{ConvergedInner}
\SetKwFunction{convergedOuter}{ConvergedOuter}
\SetKwFunction{bestResponseInner}{BestResponseInner}
\SetKwFunction{bestResponseOuter}{BestResponseOuter}
\SetKwFunction{update}{Update}
\SetKwFunction{utilityLoss}{UtilityLoss}
\SetKwFunction{verification}{Verification}
\SetKwFunction{convertStrategies}{ConvertStrategies}
\tikz[overlay,remember picture]{\coordinate (mark1)}
$s := $ truthful strategies \;
\tikz[overlay,remember picture]{\coordinate (mark2)}
\Repeat{$\convergedOuter(\tilde{\epsilon})$ \tikz[overlay,remember picture]{\coordinate (mark4)}}{
    \Repeat{$\convergedInner(\tilde{\varepsilon})$\tikz[overlay,remember picture]{\coordinate (mark3)}}{
        \ForEach{\upshape bidder $i$}{
            $s_i' := \bestResponseInner(\mathcal{M},\mathcal{V},s_\smi)$ \label{algLine:BR}\;
        }
        $\tilde{\varepsilon} := \utilityLoss(s, s')$ \;
        $s := \update(s, s')$ \;
    }
    \ForEach{\upshape bidder $i$}{
        $s_i' := \bestResponseOuter(\mathcal{M},\mathcal{V},s_\smi)$ \;
    }
    $\tilde{\varepsilon} := \utilityLoss(s, s')$ \;
    $s := \update(s, s')$ \;
}
$s^* := \convertStrategies(s)$ \label{algLine:Convert}\;
\tikz[overlay,remember picture]{\coordinate (mark5)}
$\varepsilon := \verification(s^*)$ \label{algLine:Verification}\;
\Return{$(s^*, \varepsilon)$}%
\begin{tikzpicture}[overlay,remember picture]


    \coordinate (inner1)  at (mark2 -| 7.2cm,0);
    \coordinate (inner2)  at (mark3 -| 7.2cm,0);
    \coordinate (outer1)  at (mark1 -| 8.5cm,0);
    \coordinate (outer2)  at ({$(mark4)+(0,0.05cm)$} -| 8.5cm,0);
    \coordinate (search1) at (outer1 -| 9.8cm,0);
    \coordinate (search2) at (outer2 -| 9.8cm,0);
    \coordinate (verif1)  at (mark4 -| 9.8cm,0);
    \coordinate (verif2)  at ({$(mark5)-(0,0.1cm)$} -| 9.8cm,0);

    \draw [decorate,decoration={brace,amplitude=7pt,raise=4pt}]
    (outer1) -- (outer2) node [black,midway,xshift=0.6cm, text width=0.cm] {\small outer loop};

    \draw [decorate,decoration={brace,amplitude=7pt,raise=4pt}]
    (inner1) -- (inner2) node [black,midway,xshift=0.6cm, text width=0.cm] {\small inner loop};

    \draw [decorate,decoration={brace,amplitude=7pt,raise=4pt}]
    (search1) -- (search2) node [black,midway,xshift=0.6cm, text width=0.cm] {\small search phase};

    \draw [decorate,decoration={brace,amplitude=7pt,raise=4pt}]
    (verif1) -- (verif2) node [black,midway,xshift=0.6cm, text width=0.cm] {\small verification phase};

\end{tikzpicture}
\caption{Iterated Best Response with Verification}
\label{alg:IBR}
\end{algorithm}

In this section, we present a high-level overview of our BNE algorithm.
{\color{black}
At its core, the algorithm is based on \emph{iterated best response}, which simulates the process of each bidder repeatedly updating his strategy to  be a best response to the other bidders' previous strategies, terminating when the utility loss across all bidders is small enough.
As such, our algorithm belongs to a broad class of algorithms based on adaptive dynamics \cite{milgrom1990rationalizability}, which also includes the well-known fictitious play algorithm \cite{Brown1951iterative}. See \citet{reeves2004computing} for a brief historical overview.
}

The key difference between our algorithm and the standard iterated best response procedure is that our algorithm is separated into a \emph{search phase} and a \emph{verification phase} (with the search phase being further split into an inner loop and an outer loop).
Our full algorithm is presented as Algorithm~\ref{alg:IBR} and described below at a high level.

In the first step, we initialize the strategy profile to be used during the search at truth (Line 1).\footnote{\color{black} 
The algorithm can alternatively be initialized at any other strategy profile, which could lead to finding different equilibria. However, starting at the truthful strategy profile is quite natural, and we expect it to lead to equilibria more likely to be encountered in practice.
With the proper experiment setup and given enough computational resources, our algorithm can be used to investigate questions regarding equilibrium multiplicity and robustness as well.
}
The search phase constitutes the first part of the algorithm (Lines 2-15). The core of the search phase consists of the \emph{inner loop} (%
Lines 3-9), in which iterated best response is performed.
Each iteration consists of calculating a best response for each bidder in turn (Lines 4-6), computing the utility loss (Line 7), updating the current strategy profile (Line 8), and checking for convergence (Line 9).
Note that the strategy update  we perform is a \emph{dampened update}  such that each bidder's new strategy is a mixture of his old strategy and his best response.\footnote{This is a standard method to prevent oscillations around the solution, a phenomenon typical of any procedure that iteratively searches for fixed points.}
When the inner loop converges,  we repeat the same steps in the outer loop (Lines 10-15), this time configured with higher precision.
If the outer loop fails to converge, we go back into the inner loop again, and so on.
These two nested loops help the algorithm to converge reliably without expending too much computational effort; we will discuss the role of the outer loop in more detail in Section~\ref{sec:puttingitalltogether}.
Once the outer loop has converged as well (Line 15), we have arrived at a strategy profile that is an equilibrium candidate, which concludes the search phase.
We then proceed to the verification phase (Lines 16-17), where the $\epsilon$ corresponding to this equilibrium candidate is computed.
Finally, we return a strategy profile $s^*$ and $\epsilon$ (Line 18), such that $s^*$ is an $\epsilon$-BNE.

To instantiate Algorithm~\ref{alg:IBR}, we still need to define how to compute best responses (Line \ref{algLine:BR}) and how to perform verification (Line \ref{algLine:Verification}). Turning to the former, the best response $BR_i$ is a strategy for bidder $i$ maximizing $i$'s expected utility  $\eu_i(v_i, b_i )$ for each possible valuation $v_i$:
\begin{equation}
  \label{eq:brExp}
  BR_i(v_i) := \argmax_{b_i \in \mathbb{R}_{\geq 0}^{r}} \, \eu_i(v_i, b_i).
\end{equation}
Here, the expected utility $\eu_i(v_i, b_i )$ is calculated with respect to the strategy profile $s_\smi$ of the previous round.\footnote{Note that a best response need not always exist due to discontinuities in the utility function (as discussed in Section~\ref{sec:bayesiangames}). Because of this, in our algorithm, we must accept a bid $b_i$ with utility very close to the best response utility $\bru_i(v_i)$ in place of a true best response.} 
In the rest of this section, we go into the details of how to compute the best response according to Equation~\eqref{eq:brExp}, and we provide an overview of the verification phase.

\subsection{Modeling Strategies}
\label{sec:modelingStrategies}

\begin{figure}
\centering

\begin{subfigure}{0.49\textwidth}
\includegraphics[width=\textwidth]{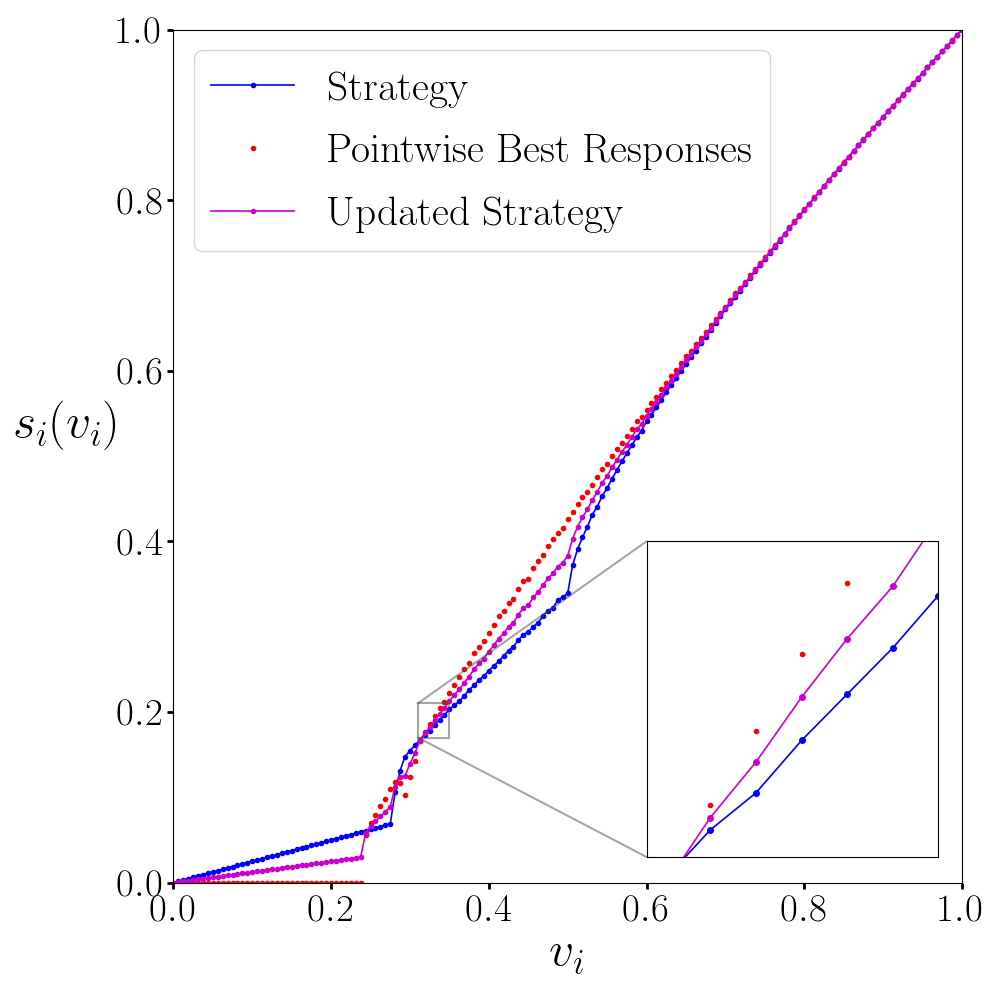}
\caption{Search phase.}
\label{fig:LLG_BR}
\end{subfigure}
\enskip
\begin{subfigure}{0.49\textwidth}
\includegraphics[width=\textwidth]{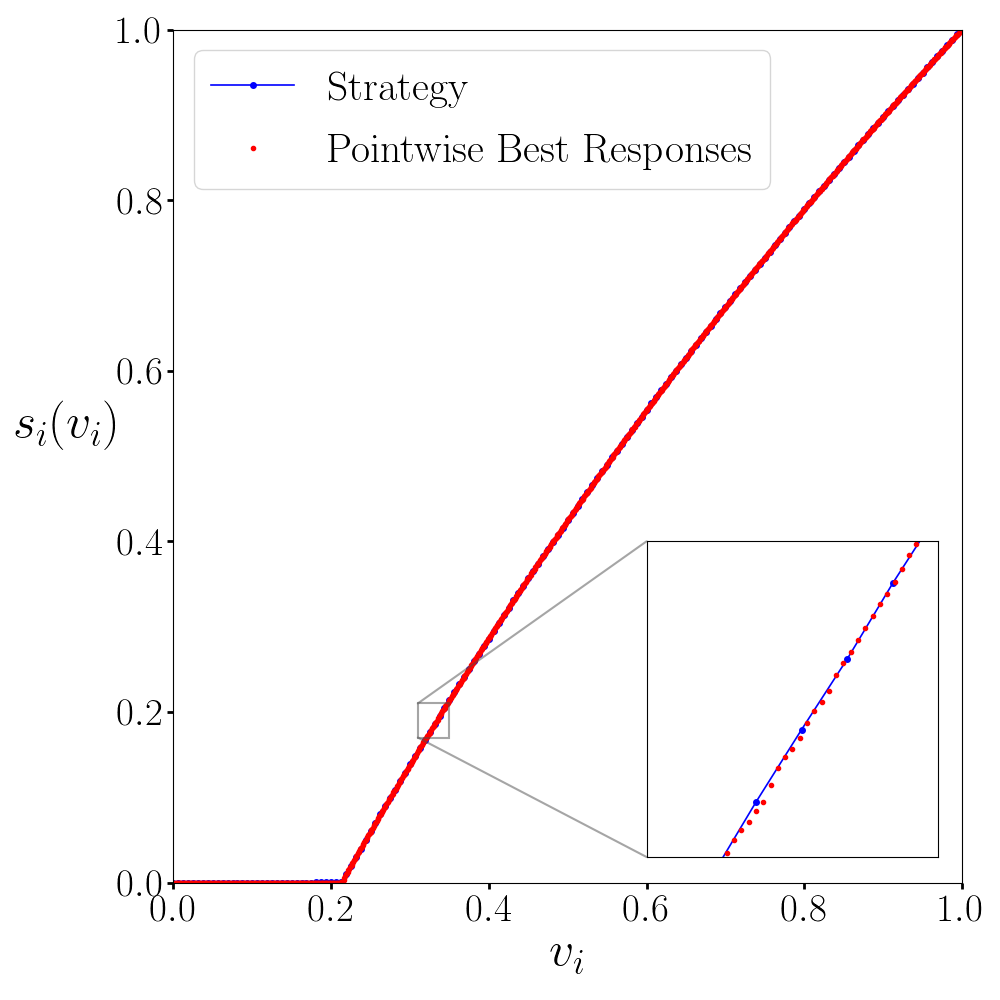}
\caption{Verification phase.}
\label{fig:LLG_Verif}
\end{subfigure}
\caption{Illustration of piecewise linear interpolation as used in our algorithm when applied to \LLG, with the proxy payment rule,  $\alpha=1.0$, and $\gamma=0.5$. (a) Strategies after two iterations (shown in blue) and three iterations (shown in purple), with pointwise best responses  used to perform the update (shown in red). (b) Strategy after eleven iterations, with pointwise best responses used for verification.}
\end{figure}

{\color{black}
To compute best responses efficiently, we construct bidders' strategies using linear interpolation.
The strategy $s_i$ is constructed by specifying a series of \emph{control points}, which are simply valuations (e.g., elements of $[0,1]$ in the case of LLG), and then assigning a bid to each control point.
To extend this to a strategy over the full value space, we perform linear interpolation between the bids at neighboring control points.\footnote{\color{black} For strategies in more than one dimension, we place the control points on a grid and perform multilinear interpolation between them. The grid structure of the control points will be helpful for applying our theoretical results (Section~\ref{sec:provenErrorBound}), which is why we prefer this method over alternatives such as linear interpolation on an arbitrary triangulation.}
There are many other methods that one might use to interpolate strategies (e.g., piecewise constant interpolation, splines, etc); we find piecewise linear interpolation to be particularly attractive as it is simple (and thus fast to evaluate) but the resulting strategy can approximate any bounded function well, given a sufficient number of control points.
}

Using piecewise linear interpolation, the task of constructing the best response $BR_i$ given in Equation~\eqref{eq:brExp}  simplifies to  finding the bid $BR_i(v_i)$ for each control point $v_i$, which we call the \emph{pointwise best response} at $v_i$.
We will discuss the subproblem of finding pointwise best responses in the next subsection.
{\color{black}
Note that this algorithm design implies that there are many valuations for which a bidder's expected utility is not directly optimized.
However, this is not problematic since we later \emph{verify} that the final $\varepsilon$-BNE is valid for \emph{all} possible valuations.
}

An example of a strategy update for the \LLG domain is shown in Figure~\ref{fig:LLG_BR}. There we  show the pointwise best responses, the dampened update of the bids at each control point, and  the linear interpolation to the full strategy.
\subsection{Pointwise Best Responses}
\label{sec:pwbr}

To compute the best response $BR_i$ over the whole value space, we need to compute many \emph{pointwise} best responses $BR_i(v_i)$, i.e., find the bid $b_i$ that maximizes the expected utility $\eu_i(v_i, b_i)$ for a fixed valuation $v_i$.
This means that at each such $v_i$, we must solve an unconstrained optimization problem over the whole action space (i.e., the set of possible bids).
This is computationally expensive and can only be approximated with numerical algorithms that evaluate $\eu_i(v_i, b_{i})$ at many different bids $b_i$.
Furthermore, the function to be optimized may be non-convex and/or non-differentiable in $b_i$ (as discussed in Section \ref{sec:bayesiangames}). There are no known algorithms that can solve non-convex optimization problems to global optimality in finite computation time.
For this reason, we employ  a sophisticated version of \emph{pattern search} (see Section \ref{sec:PatternSearch})  which can quickly find highly accurate best responses.\footnote{\citet{vorobeychik2008stochastic} proposed a best response procedure with global convergence guarantees that can be applied to auctions. However, because their procedure is not guaranteed to converge in finite time, we do not employ it in our algorithm.} 

Evaluating the expected utility $\bar{u}_i(v_i, b_i)$ is already a challenge for a fixed $v_i$ and $b_i$ because it involves solving the following  high-dimensional integral:
\begin{equation}
  \label{eq:brIntegral}
  \bar{u}_i(v_i, b_i)
  = \int_{v_\smi \sim \mathcal{V}_{\smi} | V_i = v_i} u_i(v_i, b_i, s_\smi(v_\smi)) f_\smi(v_\smi) \diff v_\smi.
\end{equation}
Here,  $f_\smi$ is the joint PDF associated with the distribution of $v_\smi$.
The dimensionality of the integral is $(n-1)\cdot d$ for an auction with $n$ bidders and value space dimension $d$. Consequently, to solve this integral, we  use \emph{Monte Carlo (MC) integration}, a technique that works by averaging the integrand $u_i(v_i, b_i, s_\smi(v_\smi)) f_\smi(v_\smi)$ over many samples of the variable of integration $v_\smi$, taken uniformly at random over all its possible realizations.\footnote{Note that Monte Carlo integration only computes an \textit{estimate} of the true value of the integral. We could use a concentration inequality (e.g., \citet{hoeffding1963probability}) to derive a confidence interval around this estimate. In our experiments, we choose a very large number of Monte Carlo samples such that the estimate becomes very accurate (by the central limit theorem); given this, we forego computing confidence intervals, following standard practice using Monte Carlo integration. However, in Section \ref{sec:compAnalyticResults}, we provide an analysis of the numerical robustness of the overall algorithm.}

\begin{remark}
  In the simple \LLG setting, the integral in Equation (\ref{eq:brIntegral})  is only two-dimensional and thus may  alternatively be solved via numerical quadrature in less time.
  In larger domains, however, only Monte Carlo techniques will scale efficiently with the problem dimension.
To keep the presentation of the algorithmic techniques comparable throughout the paper, we exclusively evaluate their performance using Monte Carlo integration.
Our released source code includes both implementations.
\end{remark}

\subsection{Verification}
\label{sec:estimatingError}

In the search phase, we are free to simplify the strategy space or make use of any other heuristics to speed up the search.
As a consequence, we only have an estimate of the $\epsilon$ of our current strategy profile, i.e., the maximum of the utility loss $\loss_i$ at all control points.
This estimate is precise enough to decide when to break out of the search phase. But to know that we have found an $\epsilon$-BNE strategy profile, we need to make sure that the utility loss is indeed less than $\epsilon$ at \emph{all} valuations.

 At first sight, it is not obvious how to achieve this, given that the value space is continuous, and we can thus not simply evaluate the utility loss at every possible valuation.
To address this problem, we have developed Theorem~\ref{thm:epsbound2} which enables us to bound the epsilon over the \emph{full} value space using only a finite number of \emph{verification points} (as illustrated in Figure~\ref{fig:LLG_Verif}). We devote Section \ref{sec:provenErrorBound} to a detailed explanation of this approach. For now, we highlight that the verification phase of the algorithm (Lines 16-17) is significantly enhanced by this theory. Note that in those settings where the conditions of the theorem are not met, we fall back on an alternative verification method that computes an estimate of the $\epsilon$ (Section \ref{sec:estimated_epsilon}).

In the next section, we describe the design of the search phase in detail.
Note that our experimental setup in that section will already require us to be able to compute the $\epsilon$ of a given strategy profile using a verification method.
For the moment, we will simply treat the verification phase as a black box that provides us with the $\epsilon$ for  any given strategy profile.
Then, in Section~\ref{sec:provenErrorBound}, we will explain our verification methods in detail.
\section{The Search Phase}
\label{sec:compBestResponses}

In this section, we turn our attention towards the details of the search phase.
Recall that in the search phase, our goal is to find an $\epsilon$-BNE as \textit{quickly} as possible.
For this, it is important to observe that the runtime of each iteration of our inner loop depends on three key parameters:
(1) the number of Monte Carlo samples used for integrating the expected utility,
(2) the number of times the expected utility is evaluated to find each pointwise best response, and
(3) the number of control points at which the pointwise best response is computed.
To achieve a fast runtime of our algorithm, these parameters need to be kept as small as possible, while ensuring that we are still able to find $\epsilon$-BNEs across a wide variety of settings. This whole section shows how to achieve this. Towards this end, we first present a baseline algorithm for computing best responses, and then provide a series of algorithmic improvements, explaining and testing each of them in turn. The application of all of our techniques leads to a cumulative \inref{SpeedAdPt}-fold speedup of our algorithm over the baseline, allowing us to find $\epsilon$-BNEs in only a few seconds in \LLG.

\subsection{Experimental Set-up}

We evaluate the runtime of each algorithmic improvement on the test set of \numSettings different variations of the \LLG domain (as introduced in Section~\ref{sec:llg}).
We use this test set because finding $\epsilon$-BNEs is still tractable in this small setting, and it enables us to compare our results against known analytical BNEs, as we will do in Section~\ref{sec:compAnalyticResults}.
We run each version of our algorithm \numRuns times on each of the \numSettings auction settings, and we report the average of these \numSettingsTimesRuns runs.\footnote{To simplify replicability, we use a different but fixed random seed for each of the runs.
This set-up makes our experiments deterministic (in the sense that they are perfectly repeatable) while still capturing the effects of randomness on our algorithm's runtime.} Runtime results for finding a $\targetEpsilon$-BNE are presented in Table~\ref{tab:LLGRuntime}.
Each run is performed single-threaded on a 2.8Ghz Intel Xeon E5-2680 v2.

Note that several of our techniques make a trade-off between
speed and accuracy, so it is important to evaluate their effectiveness \emph{as a whole} to capture how changes in accuracy affect the convergence rate and thus the speed of the overall algorithm.
Therefore, we do not just measure a single best response calculation in isolation, but measure the runtime of the entire algorithm from its start at the truthful strategy profile until reaching convergence.
However, the runtime of the whole algorithm is also affected by (a) how the transition between search and verification is done, and (b) the runtime of the verification phase itself. 
This introduces additional ambiguity regarding the true performance of our search techniques. For this reason, in this section, we separate these effects by only measuring the runtime of the inner loop of Algorithm \ref{alg:IBR} (i.e., the core of the search phase).
We do this by skipping the convergence check in Line 9 and running the inner loop for a large, fixed number of iterations instead.
For each of the strategy profiles visited by this process, we then run our verification phase to determine the first iteration at which $\epsilon$ falls below the threshold of $\targetEpsilon$ (our target $\epsilon$), at which point we consider the search to have converged.
Importantly, the runtime numbers we report in this section exclude the time needed for verification.
To understand the performance of the full algorithm, we will perform an end-to-end analysis in Section~\ref{sec:puttingitalltogether}, including the outer loop of the search phase and the verification phase.

\subsection{Naive Monte Carlo Sampling and the Problem of Variance}

We first present a straightforward implementation of our BNE algorithm.
We lay an evenly-spaced grid of \numCtrlPtsNotAdaptive control points over the value space,
and at each control point $v_i$, we maximize $\eu_i(v_i, .)$ by running Brent search, a commonly-used form of unconstrained optimization over the space of possible bids  \cite{brent1971algorithm}.
Each evaluation of $\eu_i(v_i, b_i)$ is done via Monte Carlo integration as described in Section~\ref{sec:pwbr}.

This version of the algorithm is a reasonable first step, but it fails to converge to our target $\varepsilon=\targetEpsilon$, even when using $\maxSamplesNaive$ Monte Carlo samples.
The reason for this is that the computation of the expected utility $\eu_i$ has very high variance.
In MC integration methods, any reduction in variance is always desirable, of course, but in our application this consideration is especially important.
In an iterated best response algorithm, computing an equilibrium is fundamentally a dynamic process, where the output of one iteration is fed as input into the next.
When we have high variance in the expected utility computation, this causes the computed best response to deviate from the true best response in a random direction at each control point.
Since an $\varepsilon$-BNE is defined by the worst-case utility
loss over all valuations of all bidders, a large error at a single control point during the best response computation prevents the entire algorithm from converging.
This can produce the counter-intuitive
effect that increasing the number of control points actually decreases the accuracy of the algorithm.

Our first two algorithmic improvements address this issue, reducing the variance of the Monte Carlo integration to acceptable levels.
The third version of the algorithm is the first one that converges to our target $\epsilon$, so we will use that one as the baseline algorithm against which we measure the performance impact of later improvements.

\definecolor{tablehighlight}{rgb}{0.9,0.9, 0.9}

\begin{table}[tb]
\newcommand{\numtable}[2]{\customlabeloutput{\num[round-precision={#1}]{#2}}}
\newcommand{\numlabel}[2]{\customlabel{#1}{\num[round-precision=1]{#2}}\numtable{1}{#2}}
\centering
\setlength\tabcolsep{2pt}
\begin{tabular}{l||r|r|r}
\multicolumn{1}{c||}{\textbf{Algorithm}} &
\multicolumn{1}{c|} {\crcell{\textbf{Average}\\\textbf{Iterations}}} &
\multicolumn{1}{c|} {\crcell{\textbf{Average Runtime}\\\textbf{in Seconds}}} &
\multicolumn{1}{c } {\crcell{\textbf{Cumulative}\\\textbf{Speedup}}} \\
\hhline{=::=:=:=}
Naive Monte Carlo  & - & -  & - \\
+ Importance Sampling  & - & -  & - \\
\rowcolor{tablehighlight}
+ Quasi-Random Numbers (baseline)  & \numlabel{ItersQuasi}{8.869} (\numtable{3}{0.0818}) & \numlabel{RuntimeQuasi}{950.478} (\numtable{2}{10.94})   & -                              \\
+ Common Random Numbers    & \numtable{1}{8.438} (\numtable{3}{0.0729}) & \numtable{1}{26.66} (\numtable{2}{0.3676})    & \numlabel{SpeedCRand}{35.65}x  \\
+ Adaptive Dampening   & \numtable{1}{5.893} (\numtable{3}{0.0488}) & \numtable{1}{18.744} (\numtable{2}{0.2382})    & \numlabel{SpeedAdDamp}{50.7}x \\
+ Pattern Search       & \numtable{1}{5.875} (\numtable{3}{0.0482}) & \numtable{1}{15.046} (\numtable{2}{0.14})     & \numlabel{SpeedPat}{63.17}x    \\
+ Adaptive Control Points & \numtable{1}{6.062} (\numtable{3}{0.0523})   & \numtable{1}{4.681} (\numtable{2}{0.0367})     & \numlabel{SpeedAdPt}{203.05}x  \\
\end{tabular}
\vspace{-0.05in}
\caption{Runtimes for the inner loop of our algorithm, with standard errors shown in paretheses. We compare the time required for different variants of our algorithm to achieve an estimated $\targetEpsilon$-BNE, averaged over \numRuns runs in each of our \numSettings auction settings. Note that the first two variants of the algorithm did not converge on all \numSettingsTimesRuns instances.}
\label{tab:LLGRuntime}
\vspace{-0.12in}
\end{table}

\newcommand{\SpeedAdDampRel}  {\customlabeloutput{1.42}} 
\newcommand{\SpeedPatRel}     {\customlabeloutput{1.24}} 
\newcommand{\SpeedAdPtRel}    {\customlabeloutput{3.21}} 

\subsection{Importance Sampling}

{\color{black}
When using MC integration to determine the expected utility of bidder $i$, it is often desirable to adapt the sampling process to exclude realizations of other bidders' valuations $V_\smi$ that lead bidder $i$ to win the empty bundle.
This is because, in most auctions, the utility in this situation is simply zero, and the computational effort used to evaluate that sample is wasted without providing any additional information about the expected utility.
Excluding such valuations can be achieved by implementing a variant of \emph{importance sampling} \cite{press2007numerical}, where we truncate the distribution that we sample from, and correct for this change by appropriately weighting the samples obtained.

In general, importance sampling is difficult to implement in CAs, as it requires characterizing the realizations of $V_\smi$ where bidder $i$ wins the empty bundle.
Thus, the implementation of this technique is domain specific, as it must exploit the properties of the domain at hand to be efficient.
In \LLG, we know that local bidder 1 wins exactly when $b_1 + s_2(v_2) > s_3(v_3)$. Therefore, we can provide the following implementation in LLG: we first draw the valuation $v_2$ from $\Unif[0, 1]$ as usual, but then we draw the valuation $v_3$ from $\Unif[0, b_1 + s_2(v_2)]$ instead of from $\Unif[0,2]$.
This guarantees that the local bidders are winners, because we know that the global bidder is truthful and thus $s_3(v_3) = v_3$.
The resulting sample is then multiplied with the factor $(b_1 + s_2(v_2)) / 2$ to correct the bias we have introduced.
This technique is especially important for small bids where bidder 1 only wins rarely, where otherwise very few samples would fall in the winning region, leading to high variance.
We have added this LLG-specific implementation to our algorithm, but we leave the exploration of more general implementations to future work.
}

While highly useful, even with this improvement our algorithm still fails to converge on all instances, so this version of the algorithm is also not suitable as a baseline.
For that, we need to add another improvement.
\subsection{Quasi-Random Numbers}

Another effective method for reducing variance is to replace \emph{pseudo-random} numbers (i.e., the standard random number implementation available in most programming languages) with \emph{quasi-random} numbers in the sampling process.
While pseudo-random numbers try to reproduce the properties of true random numbers as closely as possible (including their tendency to form clumps), quasi-random numbers cover the sampled region more evenly \cite{morokoff1995quasi}.
In our algorithm, we use a multi-dimensional Sobol sequence. This improvement enables convergence to our target of $\varepsilon=\targetEpsilon$, using \numMCSamplesQuasi MC samples.
On average, the search converges in \inref{ItersQuasi} iterations and \inref{RuntimeQuasi} seconds (see Table~\ref{tab:LLGRuntime}).
As this is the first variant of the algorithm that reaches convergence, we use it as the baseline algorithm against which we compare additional techniques.

\begin{remark}
The number of MC samples required for convergence might seem surprisingly high.
Many of our auction instances would also converge using considerably fewer samples, but \numMCSamplesQuasi is the minimum number required to make each of our 16 auction settings converge in all \numRuns out of \numRuns runs.
{\color{black}
In fact, most of the auction instances converge with as little as 20,000 - 50,000 samples, but some specific instances, in particular those using the proxy payment rule, need larger numbers for convergence.
We have also observed a small trend for instances with non-uniform value distributions ($\alpha = 2$) to require more samples for convergence than their uniform counterparts.
}
To keep our experimental set-up simple and consistent, we did not optimize the number of samples for each setting individually, but rather chose a single number that makes our full set of \numSettingsTimesRuns runs converge.
\end{remark}
\subsection{Common Random Numbers}

When computing a  pointwise best response for a given valuation, we repeatedly compare the expected utility of two different bids.  If $X$ and $Y$ are the random variables representing the expected utility associated with two bids, then we want to determine if $\Exp [ X ] - \Exp [ Y ]$ is greater
or smaller than zero to decide which bid is better. Using common random numbers \cite{glasserman1992some}, we can compute
$\Exp [ X - Y ]$ instead and get the same result with
lower variance.
We integrate this idea by using the same sequence of samples to
compute both $\Exp [ X ]$ and $\Exp [ Y ]$.  The samples used for both integrals are pairwise
perfectly correlated but still quasi-random when considering each of
the integrals in isolation.  
{\color{black}
Adding this technique improves the convergence behavior of our algorithm substantially.
Correspondingly, we reduce the number of samples used to \numMCSamplesCommon, which is enough for all runs to converge to our target $\varepsilon$. This  is only \pctMCSamplesCommonToBaseline of the samples needed by the baseline algorithm, resulting in a \inref{SpeedCRand}-fold speedup.
}
Note that we get more than a \factorMCSamplesCommonToBaseline-fold speedup because, in addition to saving a factor \factorMCSamplesCommonToBaseline in the expected utility computation, this change decreases the number of function evaluations required by Brent search for finding the best response, and additionally makes the algorithm converge in slightly fewer iterations.
\subsection{Adaptive Dampening of Strategy Updates}

As mentioned in Section~\ref{sec:bneAlgorithmFramework}, in each iteration of the inner loop we perform a dampened update of the strategy profile.
This means that instead of just replacing the previous strategy with its best response (as would be done in a naive implementation of iterated best response), we make the current strategy at each control point $v_i$ a weighted combination of the previous strategy and the best response: $s_i(v_i) = (1-w) \cdot s_i (v_i) + w \cdot BR_i(v_i, s_{\smi})$, for some update weight $w \in [0,1]$.
However, if $w$ is chosen too high, the iterative process might fall into an oscillating behavior around the solution and converge very slowly or not at all.
Conversely, if $w$ is chosen too low, convergence is slowed down when the current strategy is far from equilibrium.
In the baseline algorithm, we use an update weight $w=\dampeningFactorNaive$ to balance these two concerns, but of course a fixed update weight is suboptimal.

To make the convergence process of the algorithm smoother, we now design an adaptive form of strategy updates (see \citet{Fudenberg1995FictitiousPlay} and \citet{lubin2009quantifying} for earlier work on adaptive dampening).
Concretely, we set the update weight dynamically for each individual control point, based on how close to a solution we expect to be:
\begin{equation}
    \label{eq:dampening}
  w(v_i) = \frac{2}{\pi} \arctan(c \cdot \loss_i(v_i, s_i(v_i))) \cdot (w_{max} - w_{min}) + w_{min}.
\end{equation}

To understand the effect of the rule, observe that $\frac{2}{\pi} \arctan$ maps all positive numbers into the $[0, 1)$ interval, and therefore Equation~\eqref{eq:dampening} creates a weight between $w_{min}$ and $w_{max}$, separately for each
control point.\footnote{In our experiments we use the constants $w_{min} = 0.2$, $w_{max} = 0.7$, and $c = \frac{1}{2\epsilon}$.} Thus, when the utility loss $\loss_i$ at the control point $v_i$ is large, the weight $w(v_i)$ is also large, resulting in an aggressive update step that reaches the neighborhood of an equilibrium in few iterations.
Conversely, when the utility loss is small, the update step is much more conservative, reducing the risk of overshooting the equilibrium.
This adaptive dampening technique combines the benefits of a rule with a fixed large weight (which allows taking big steps) and a rule with a fixed small weight (which leads to high stability). Adding  adaptive dampening results in another \SpeedAdDampRel-fold speedup, leading to a cumulative \inref{SpeedAdDamp}-fold speedup over
the baseline algorithm.
\subsection{Pattern Search}
\label{sec:PatternSearch}

{\color{black}
So far, we have used Brent search to compute pointwise best responses. While Brent search is a very accurate method for function maximization, it has two problems.
First, it evaluates the function to be maximized many times.
Second, it can only be applied to single-dimensional functions and thus prevents us from applying our algorithm to larger, multi-dimensional auction settings (including those we will study in Section~\ref{sec:HigherDims}).
What we need is an optimization procedure that requires fewer function evaluations and scales to an arbitrary number of dimensions.

To this end, we replace Brent search with a slightly modified version of   \emph{pattern search} \cite{hooke1961direct}. 
Pattern search is a type of hierarchical local search that evaluates a number of points around a \textit{center} according to a fixed
pattern. The center is initialized to a guess of the solution (in our case, we use the current strategy at the given control point).  Whenever a better solution is found, pattern search moves the center of the pattern there and continues searching.  If a better solution could not be found, it reduces the size of the pattern by half and continues searching around the center at a smaller scale.
Thus, there are two kinds of steps that can occur during the search: \textit{moving steps} and \textit{shrinking steps}.
Standard pattern search terminates when the pattern reaches a sufficiently small scale.
However, choosing the correct scale to stop the search is particularly challenging in our case, which we resolve as follows.
Recall that, in our BNE algorithm, each bidder's strategy is repeatedly updated to be a best response to the other bidders' previous strategies.
In the early iterations of this process, high precision is not needed, because we are far away from an $\epsilon$-BNE, and thus very small differences in utilities can be ignored because strategies will undergo major changes from one iteration to the next anyway.
In contrast, once the iterative process has roughly settled on the final strategies, high precision is required to actually converge with the $\varepsilon$ we are aiming for.

To properly serve both of these needs, we slightly modify pattern search to choose  its precision adaptively. Concretely, we equip the algorithm with a budget that denotes the maximal number of steps the algorithm is allowed to take. We subtract two steps from the budget whenever the algorithm takes a moving step, and only one step whenever the algorithm takes a shrinking step. This has the effect of only allowing the algorithm to take many steps when pattern search stays close to the initial strategy at the given control point (which is an indicator that we are close to convergence). In \LLG, we use a pattern of 3 points and a budget of 12 steps.} Overall, this \textit{budgeted pattern search} technique  requires fewer evaluations of the expected utility than Brent search when
high precision is unnecessary, and is almost as accurate when high precision
\emph{is} needed.
Adding budgeted pattern search to our algorithm results in a \SpeedPatRel-fold speedup, leading to a cumulative \inref{SpeedPat}-fold speedup over the
baseline algorithm.
\subsection{Adaptive Control Point Placement}
\label{sec:adaptiveCP}

{\color{black}
A bidder's BNE strategy often has regions of both high and low
curvature (see, e.g., Figure~\ref{fig:LLG_Verif} in Section~\ref{sec:bneAlgorithmFramework}). Thus, when constructing a piecewise linear function to approximate this BNE strategy, using an equal spacing of control points either has too many unnecessary points in flat regions (wasting computational effort) or not enough points in curved regions\ (not achieving sufficient accuracy when needed). In particular, if we  used a spacing of control points that is too coarse, then the linear interpolation between the control points may not approximate the true best response well enough in those regions where the best response
is strongly curved (or even non-smooth). This can lead to a large utility loss at \emph{some} valuations in the value space (and thus a large $\varepsilon$), even when the utility loss  at the control points themselves  is very small.

To avoid this, we use an adaptive control point placement method which iteratively constructs an irregular grid with a higher density of control points in regions where this is deemed necessary. This is a well-known technique \cite{huang2010adaptive}; however, applying it to a specific problem requires specifying a criterion by which to determine the highest priority region where the next control point should be placed. In LLG, we initialize this method with an evenly spaced grid of only 10 control points.
We then iteratively place additional control points at the midpoint between two neighboring  control points where the estimated curvature of the best response function is largest. %
Using this adaptive method, the control points are spaced further apart in regions of low curvature, allowing us to reduce the overall amount of work done in the best response computation, while retaining the same accuracy where needed. We provide pseudocode for this method in Appendix~\ref{app:adaptiveCP}.}

Using adaptive control point placement, we obtain convergence in all \numSettings  auction settings with only $\numCtrlPtsAdaptive$ control points (10 in the initial grid plus 30 adaptively placed) instead of the $\numCtrlPtsNotAdaptive$ previously required.
Adding adaptive control points to our algorithm results in a \SpeedAdPtRel-fold speedup, leading to an overall \inref{SpeedAdPt}-fold speedup.
The cumulative improvement achieved by adding all of the techniques in this section means that our algorithm can find very precise $\epsilon$-BNEs in the LLG domain using only a few seconds of computation time.
\section{The Verification Phase}
\label{sec:provenErrorBound}

Now that we have a fast algorithm to find an $\epsilon$-BNE candidate, it remains to show how to accurately compute $\epsilon$.
This is the task of the verification phase.
We have already established that the utility loss is small \textit{at each control point} used in the search phase; otherwise, the algorithm would not have broken out of the iterated best response loop.
However, a computation of $\epsilon$ requires us to find the maximum utility loss \textit{at every point} in the continuum of valuations.
Unfortunately, the maximum is not a well-behaved statistic, in the sense that even if we were to check the utility loss at more and more control points, we would never be sure that there is no gap remaining between the worst utility loss we have observed and the worst utility loss that exists.\footnote{This is a different dynamic than, e.g., computing the \emph{average} of a function, where taking more and more samples guarantees convergence towards the exact solution, even in continuous domains.}
This is why we need to think carefully about the design of the verification phase.

On a high level, our verification method (and main technical contribution of the paper) involves reasoning about the utility loss in terms of intervals $[v_i, v_i']$ of valuations. The key insight lies in using certain properties of the endpoints $v_i$ and $v_i'$ to provide an upper bound that holds for any valuation contained in the interval.
This approach is almost fully general: it works for auctions with any allocation and payment rule, but we must make a few assumptions on the bidders' valuations and utilities.
We formalize our results as Theorems~\ref{thm:epsbound}~and~\ref{thm:epsbound2}  in Section~\ref{sec:theorem},  and we then discuss how to compute the upper bound on $\epsilon$ based on these theorems in Section~\ref{sec:theoremApplication}.\footnote{Note that Theorems~\ref{thm:epsbound} and \ref{thm:epsbound2} are significantly stronger versions of the theorem we presented in \cite{bosshard2017fastBNE} producing much tighter bounds on $\epsilon$.  The general idea of using a finite subset of the value space to obtain a bound on the whole value space was also used more recently by \citet{Balcan2019} who employ learning theory to estimate approximate incentive compatibility of non-truthful mechanisms. }

For those settings where the assumptions of Theorems \ref{thm:epsbound} and \ref{thm:epsbound2} are not satisfied, we fall back on an alternative verification method that computes an estimate of $\epsilon$ instead of an upper bound. We present this alternative verification method in Section~\ref{sec:estimated_epsilon}. In  Section~\ref{sec:upper_vs_lower_bound}, we show that, when using sufficient computation time, the upper bound on $\epsilon$ as well as the estimated $\epsilon$ are very close to the true $\epsilon$.

\subsection{Deriving an Upper Bound on $\bm\varepsilon$}
\label{sec:theorem}

Our upper bound on $\epsilon$ requires three properties to hold regarding the bidders in the CA:

\begin{assumption}[Linear Utilities]
Each bidder's utility function is of the form
\begin{equation}
    u_i(v_i, b, X, p) = v_i(X_i(b)) - p_i(b),
\end{equation}
i.e., linear in the valuation $v_i$.
\label{asm:QuasiLinear}
\end{assumption}

\begin{assumption}[Bounded Value Spaces]
{\color{black}
Each bidder's value space $\mathbb{V}_i$ is bounded.
}
\label{asm:BoundedValues}
\end{assumption}

\begin{assumption}[Independently Distributed Valuations]
The valuations $V_i$ are mutually independent random variables.
\label{asm:IndependentValues}
\end{assumption}

Assumption~\ref{asm:QuasiLinear} is relatively standard in auction theory.
This class of utility functions is a subset of quasi-linear utilities \cite{Mas-Colell1995Mic}; it excludes some bidder models that are less common, e.g., quasi-linear utilities with risk aversion.
Assumption~\ref{asm:BoundedValues} is also standard and not very restrictive.
In contrast, Assumption~\ref{asm:IndependentValues}
is more restrictive.
It excludes all CAs with interdependent valuations (which includes settings where bidders' private values depend on publicly observable signals).

To derive a bound on $\epsilon$, we need to know the gap between the expected utility $\eu_i(v_i, b_i)$ obtained at an equilibrium candidate $s$, and the expected utility $\bru_i(v_i)$ that can be obtained by playing a best response against $s_\smi$.
To bound the gap between these two functions, we now make two observations.
First, the fact that utilities are linear in the valuation implies that the \emph{expected} utilities are linear in the valuation as well:
\begin{lemma}
    In a CA satisfying Assumption~\ref{asm:QuasiLinear} (linear utilities) and Assumption~\ref{asm:IndependentValues} (independently distributed valuations), for a fixed bid $b_i$, the expected utility $\eu_i(v_{i }, b_i)$ is a linear function in $v_i$.
    \label{lem:utilplane}
\end{lemma}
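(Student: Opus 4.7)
The plan is to unpack the expected-utility integral and show that, once the valuation $v_i$ is factored out of the inner expression, it enters only through a coordinate projection, which is linear. Independence of valuations (Assumption~\ref{asm:IndependentValues}) is what allows us to pull $v_i$ out of the integration cleanly; linearity of utilities (Assumption~\ref{asm:QuasiLinear}) is what makes the resulting integrand linear in $v_i$.

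First I would use Assumption~\ref{asm:IndependentValues} to replace the conditional distribution $\mathcal{V}_\smi \mid V_i = v_i$ in Equation~\eqref{eq:util} with the unconditional distribution $\mathcal{V}_\smi$, so the measure of integration no longer depends on $v_i$. Then I would substitute the linear-utility form from Assumption~\ref{asm:QuasiLinear} into the integrand, giving
\begin{equation*}
    \eu_i(v_i, b_i) = \Exp_{V_\smi \sim \mathcal{V}_\smi}\bigl[\, v_i(X_i(b_i, s_\smi(V_\smi))) - p_i(b_i, s_\smi(V_\smi)) \,\bigr].
\end{equation*}
Note that the allocation $X_i(\cdot)$ and payment $p_i(\cdot)$ are functions of the bid profile only, so conditional on a realization $v_\smi$ of $V_\smi$ and for fixed $b_i$, both the allocated bundle $K^\star := X_i(b_i, s_\smi(v_\smi))$ and the payment $p_i(b_i, s_\smi(v_\smi))$ are constants independent of $v_i$.

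Next I would exploit the structure of the value space: $v_i$ is a vector in $\mathbb{V}_i \subseteq \mathbb{R}^d_{\geq 0}$ whose components are the values for the $d$ bundles $K_1, \ldots, K_d$ of interest, with $v_i(K) = 0$ for any other bundle. Hence $v_i \mapsto v_i(K^\star)$ is either a coordinate projection (if $K^\star \in \{K_1, \ldots, K_d\}$) or the zero functional, and in either case it is a linear function of $v_i$. Therefore, for each fixed $v_\smi$, the integrand inside the expectation is the sum of a linear function of $v_i$ and a constant (the payment term, which does not involve $v_i$), hence affine in $v_i$. Since the expectation is a linear operator and the measure does not depend on $v_i$, it preserves affinity, so $\eu_i(v_i, b_i)$ is an affine function of $v_i$ for fixed $b_i$, which matches the lemma's claim of linearity in the usual auction-theoretic sense (the payment term acts as a constant intercept).

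The main subtlety is purely notational rather than technical: one must be careful that $v_i(\cdot)$ denotes both the valuation vector in $\mathbb{R}^d$ and its interpretation as a set function on bundles, and verify that the mapping between these two representations is indeed linear for any realized bundle $K^\star$. The use of Assumption~\ref{asm:IndependentValues} is essential here; without it, the conditional measure would depend on $v_i$ and the expectation could introduce arbitrary nonlinearities through the density $f_\smi(v_\smi \mid v_i)$.
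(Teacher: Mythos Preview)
Your proposal is correct and follows essentially the same approach as the paper's proof: use independence to make the integration measure free of $v_i$, note that for each realization the allocation and payment depend only on the bid profile, and conclude by linearity of expectation. The only cosmetic difference is that the paper decomposes the valuation term as $\sum_{K\subseteq M} v_i(K)\cdot [X_i(\cdot)=K]$ to exhibit the explicit linear coefficients $c_i^K = \Pr[X_i(\cdot)=K]$ and constant $c_i = \Exp[p_i(\cdot)]$, whereas you argue pointwise that $v_i\mapsto v_i(K^\star)$ is a coordinate projection.
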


\begin{proof}
    This follows directly from linearity of expectation. In detail:
    \begin{align*}
    \eu_i(v_i, b_i)
    =& \Exp_{v_{\smi}\sim V_{\smi} | V_i = v_i} \left[ v_i(X_i(b_i, s_\smi(v_\smi))) - p_i(b_i, s_\smi(v_\smi)) \right]\\
    =& \Exp_{v_{\smi}\sim V_{\smi} | V_i = v_i} \left[ \sum_{K \subseteq M} v_i(K) \cdot [ X_i(b_i, s_\smi(v_\smi)) = K ] - p_i(b_i, s_\smi(v_\smi)) \right]\\
    =& \sum_{K \subseteq M} v_i(K) \cdot \Exp_{v_{\smi}\sim V_{\smi} | V_i = v_i} \left[ X_i(b_i, s_\smi(v_\smi)) = K \right] - \Exp_{v_{\smi}\sim V_{\smi} | V_i = v_i} \left[ p_i(b_i, s_\smi(v_\smi)) \right]\\
    =& \sum_{K \subseteq M} v_i(K) \cdot c_i^K - c_i.
\end{align*}
Note that $c_i^K$ and $c_i$ are constants independent of $v_i$ because valuations are independently distributed.
\end{proof}

Second, we show that the best response utility $\bru_i(v_i)$ is convex:

\begin{lemma}
    In a CA satisfying Assumption~\ref{asm:QuasiLinear} (linear utilities) and Assumption~\ref{asm:IndependentValues} (independently distributed valuations), the best response utility $\bru_i(v_i)$ is convex.
    \label{lem:bru_convexity}
\end{lemma}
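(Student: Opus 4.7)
The plan is to exploit the fact that the best response utility is a pointwise supremum of a family of affine functions of $v_i$, and then invoke the standard convex-analysis fact that such a supremum is convex.

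First, I would fix a bid $b_i$ and apply Lemma~\ref{lem:utilplane} to conclude that the map $v_i \mapsto \eu_i(v_i, b_i)$ is linear, hence affine, and therefore in particular convex. This step uses both Assumption~\ref{asm:QuasiLinear} (to ensure utilities are linear in $v_i$) and Assumption~\ref{asm:IndependentValues} (to ensure that the coefficients $c_i^K$ and $c_i$ obtained from taking expectations do not depend on $v_i$).

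Next, I would rewrite the best response utility as
\begin{equation*}
    \bru_i(v_i) \;=\; \sup_{b_i' \in \mathbb{R}_{\geq 0}^{r}} \eu_i(v_i, b_i'),
\end{equation*}
recognizing this as the pointwise supremum, over the index set of bids $b_i' \in \mathbb{R}_{\geq 0}^{r}$, of the family of affine functions $\{v_i \mapsto \eu_i(v_i, b_i')\}_{b_i'}$. The argument then closes by invoking the standard result that the pointwise supremum of an arbitrary family of convex functions is convex: for any $v_i, v_i' \in \mathbb{V}_i$ and $\lambda \in [0,1]$, each member of the family satisfies the convexity inequality with equality (being affine), and taking the supremum on both sides preserves the inequality. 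A short direct verification I would include: fix $\delta > 0$, choose $b_i^\delta$ nearly achieving the supremum at $\lambda v_i + (1-\lambda)v_i'$, and upper bound $\eu_i(\lambda v_i + (1-\lambda)v_i', b_i^\delta)$ by $\lambda \eu_i(v_i, b_i^\delta) + (1-\lambda)\eu_i(v_i', b_i^\delta) \leq \lambda \bru_i(v_i) + (1-\lambda)\bru_i(v_i')$, then let $\delta \to 0$.

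There is no real obstacle here: the result is essentially a two-line argument once Lemma~\ref{lem:utilplane} is in hand. The only subtlety is that we use $\sup$ rather than $\max$ (since the best response may not be attained, as noted in the discussion following Equation~\eqref{eq:brutil}), which is precisely why I would phrase the closing step via an $\epsilon$-optimal bid $b_i^\delta$ rather than an exact maximizer; this circumvents any existence issue and makes the proof fully general.
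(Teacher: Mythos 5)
Your proposal is correct and follows exactly the paper's argument: apply Lemma~\ref{lem:utilplane} to get that each $v_i \mapsto \eu_i(v_i, b_i')$ is linear (hence convex), and then use that the pointwise supremum of a family of convex functions is convex. The extra verification via a $\delta$-optimal bid $b_i^\delta$ is a fine explicit justification of that standard fact, but it is the same proof the paper gives.
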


\begin{proof}
    Per Lemma~\ref{lem:utilplane}, the expected utility $\eu_i(v_{i }, b_i)$ is linear in $v_i$ for a particular bid $b_i$ and therefore also convex.
    The best response utility $\bru_i(v_i)$ is the pointwise supremum of the expected utilities associated with each possible bid $b_i$.
    Therefore, it is the upper envelope of a set of convex functions, which is known to be convex.
\end{proof}

In addition to Assumptions \ref{asm:QuasiLinear}, \ref{asm:BoundedValues} and \ref{asm:IndependentValues}, Theorems~\ref{thm:epsbound} and~\ref{thm:epsbound2}  require the use of \emph{piecewise constant strategies}.
{\color{black}
In Section~\ref{sec:theoremApplication}, we provide a detailed description of how our BNE algorithm converts an arbitrary (candidate) strategy profile into a piecewise constant strategy profile at the beginning of the verification phase. 

}
\subsubsection{Upper Bound in One Dimension}

As a first step, we state a theorem for the upper bound for one-dimensional value spaces, i.e., when bidders are single-minded.
This will help in developing the necessary intuition.
Afterwards, we will state our result for the general case and provide a formal proof.

In the one-dimensional case, Assumption~\ref{asm:BoundedValues}  (bounded value spaces) implies that there are two valuations $v_i^\text{min}$ and $v_i^\text{max}$, such that $\Pr \left[ V_i < v_i^\text{min} \right] = 0$ and $\Pr \left[ V_i > v_i^\text{max} \right] = 0$.
Then, a \emph{piecewise constant} strategy $s_i$ is uniquely defined by a finite set of grid points $w_i^{1} < w_i^2 < \ldots < w_i^{J}$, and a bid $s_i(w_i^j)$ for each grid point $w_i^j$.
The grid must cover the entire value space, that is $w_i^1 \leq v_i^\text{min}$ and $w_i^J \geq v_i^\text{max}$.
For any valuation $v_i$, we then have that $s_i(v_i) = s_i(w_i^j)$, where $w_i^j$ is the grid point closest to $v_i$ from below.

Having defined piecewise constant strategies for the single-minded case, we are now ready to state our first theorem, upper bounding the $\epsilon$ for this case:

\begin{theorem}
\label{thm:epsbound}
{\color{black}
    Consider a CA satisfying Assumption~\ref{asm:QuasiLinear} (linear utilities), Assumption~\ref{asm:BoundedValues} (bounded value spaces), and Assumption~\ref{asm:IndependentValues} (independently distributed valuations).
    Let $s^*$ be a strategy profile where each strategy $s_i^*: \mathbb{R}_{\geq0}\mapsto \mathbb{R}_{\geq0}$ is piecewise constant with grid points $w_i^1, \ldots, w_i^J$.
    Then $s^*$ is an $\varepsilon$-BNE with
    }
    \begin{equation}
        \label{thmformula}
        \epsilon \leq \max_{i \in N} \max_{j \in \{2, \ldots, J\}} \, \, \,
        \max_{w_i \in \{w_i^{j-1}, w_i^j\}} \, \, \,
        \bru_i\left(w_i\right) \, \, - \, \, \eu_i \! \left(w_i, s_i^* \! \left(w_i^{j-1}\right)\right).
    \end{equation}
\end{theorem}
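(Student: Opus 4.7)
The plan is to bound the utility loss $\loss_i(v_i, s_i^*(v_i))$ uniformly over every valuation $v_i$ in the value space by reducing the problem to a finite check at the grid points.

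First, I would fix a bidder $i$ and a valuation $v_i$ in the support $[v_i^{\text{min}}, v_i^{\text{max}}]$, and locate the unique index $j \in \{2, \ldots, J\}$ such that $v_i \in [w_i^{j-1}, w_i^j]$ (such an index exists because $w_i^1 \leq v_i^{\text{min}}$ and $w_i^J \geq v_i^{\text{max}}$). Because $s_i^*$ is piecewise constant with grid points $w_i^1, \ldots, w_i^J$ and assigns to $v_i$ the bid at the nearest grid point from below, the bid placed on the entire interval $[w_i^{j-1}, w_i^j)$ is the constant $b := s_i^*(w_i^{j-1})$. Hence $\loss_i(v_i, s_i^*(v_i)) = \bru_i(v_i) - \eu_i(v_i, b)$ throughout this interval.

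The core step is to show that this gap is maximized over $[w_i^{j-1}, w_i^j]$ at one of the two endpoints. By Lemma~\ref{lem:utilplane}, the function $v_i \mapsto \eu_i(v_i, b)$ is linear on the value space (using Assumptions~\ref{asm:QuasiLinear} and \ref{asm:IndependentValues}); by Lemma~\ref{lem:bru_convexity}, $v_i \mapsto \bru_i(v_i)$ is convex. The difference of a convex function and a linear function is convex, so $v_i \mapsto \bru_i(v_i) - \eu_i(v_i, b)$ is convex on $[w_i^{j-1}, w_i^j]$, and a convex function on a compact interval attains its maximum at one of the endpoints. Thus
\[
    \bru_i(v_i) - \eu_i(v_i, b) \;\leq\; \max_{w_i \in \{w_i^{j-1}, w_i^j\}} \bigl[\bru_i(w_i) - \eu_i(w_i, s_i^*(w_i^{j-1}))\bigr].
\]

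Finally, taking the maximum over all intervals $j \in \{2, \ldots, J\}$ and over all bidders $i \in N$ yields the bound in Equation~\eqref{thmformula}, and by the definition of an $\epsilon$-BNE this value of $\epsilon$ is valid. I expect the only delicate point to be verifying that the piecewise constant structure together with the grid covering $[v_i^{\text{min}}, v_i^{\text{max}}]$ really places every $v_i$ inside some closed subinterval $[w_i^{j-1}, w_i^j]$ on which the bid is $s_i^*(w_i^{j-1})$ (in particular handling the right endpoint $w_i^J$), but this is a straightforward consequence of the definitions; the substantive content is the convexity argument, which is immediate from the two lemmas.
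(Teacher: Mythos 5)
Your proof is correct and follows essentially the same route as the paper: the paper's argument also rests on the bid being constant on each grid interval, Lemma~\ref{lem:utilplane} making the equilibrium utility linear there, and Lemma~\ref{lem:bru_convexity} making $\bru_i$ convex, so that the loss is controlled by its values at the two endpoints (the paper phrases this by bounding $\bru_i$ with the secant and taking the difference of two linear functions, which is the same convexity fact you invoke directly). The right-endpoint subtlety you flag is treated no more carefully in the paper's one-dimensional statement either; it is resolved properly only in the general Theorem~\ref{thm:epsbound2} via cell partitions that cover the boundary.
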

\begin{figure}
\centering

\begin{subfigure}{0.45\textwidth}
\includegraphics[width=\textwidth]{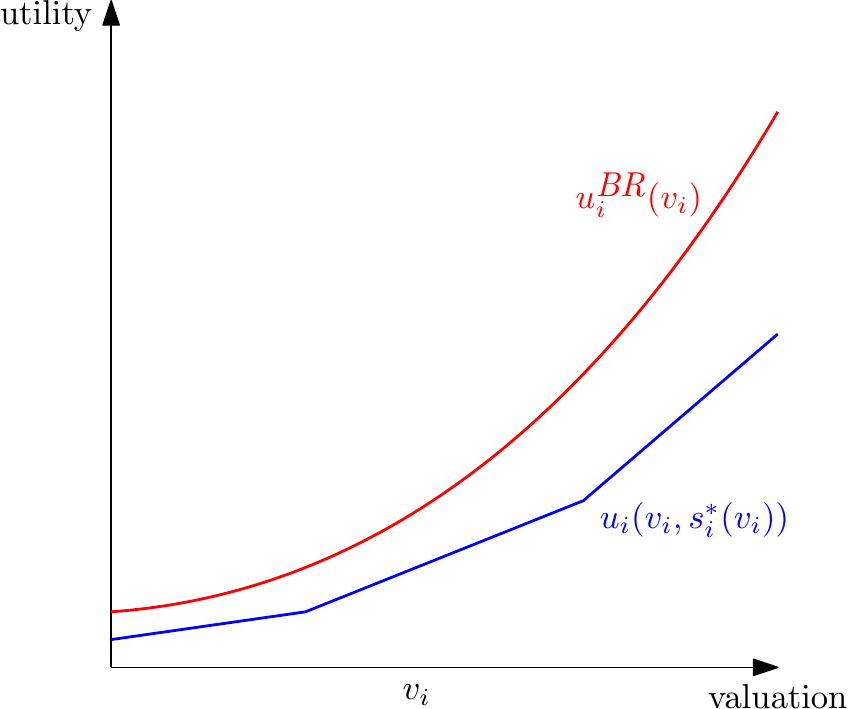}
\caption{Equilibrium utility (blue) and best response utility (red).}
\end{subfigure}
\qquad
\begin{subfigure}{0.45\textwidth}
\includegraphics[width=\textwidth]{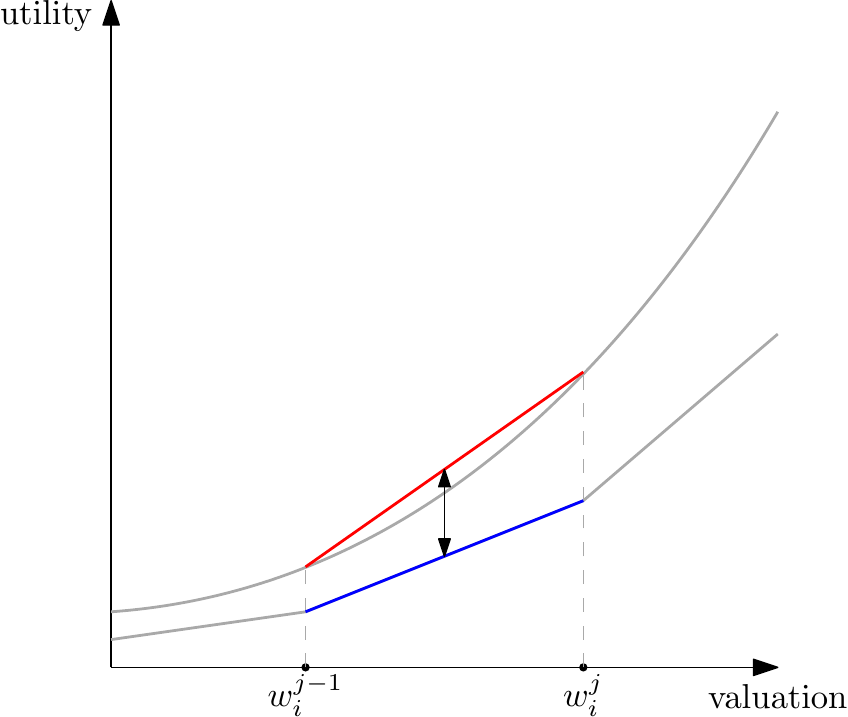}
\caption{The maximum utility loss is bounded by two linear functions.}
\end{subfigure}
\caption{Illustration of Theorem~\ref{thm:epsbound}. Plots are from valuations to utility. Left: the utility that bidder $i$ obtains in equilibrium (which is piecewise linear), plotted against the highest utility he could achieve for any bid (which is convex). Right: The utility loss at any valuation between two control points is at most the difference between two linear functions. The maximum of this difference must be achieved at one of the control points.}

\label{fig:proofbypicture}
\end{figure}

The way the theorem works is as follows: In Equation~\eqref{thmformula}, we bound $\epsilon$ separately for each bidder and each interval between two grid points, with bidders being indexed by $i$ and intervals being indexed by $j$.
For a pair of adjacent grid points $w_i^{j-1}$ and $w_i^j$, we observe two things. First, all values $v_i$ in the interval $[w_i^{j-1}, w_i^j)$ have the same equilibrium bid, namely $s_i^*(w_i^{j-1})$, yielding linearly increasing expected utility by Lemma~\ref{lem:utilplane}.
Thus, the equilibrium utility is a piecewise linear function.
Second, the best response utility is convex in $v_i$ by Lemma~\ref{lem:bru_convexity}, and can thus be bounded from above by the secant between $w_i^{j-1}$ and $w_i^j$.
Figure~\ref{fig:proofbypicture} illustrates the situation graphically.
At any $v_i$, the vertical distance between these two lines is an upper bound for the utility loss. Now, the difference between two linear functions is also linear, and thus achieves its maximum at the boundary of the interval. Therefore, it is enough to check this distance at $w_i^{j-1}$ and $w_i^j$ to compute a bound on the utility loss for all $v_i$ between them.
\subsubsection{Upper Bound in Higher Dimensions}

We now proceed to the general case, where bidders can be arbitrarily multi-minded.
This is more technically involved, as it requires considering the topology of high-dimensional partitions of the value space, and the best response utility must be bounded by a set of simplices.
We begin by introducing the notation required to define piecewise constant strategies in the general case.

\begin{definition}[Cell]
    A cell $ [x, y) \subseteq \mathbb{R}_{\geq0}^d$ is a half-open multi-dimensional interval with lower corner $x$ and upper corner $y$, i.e.,
    $$[x, y) := \bigtimes_{j=1}^d [x_j, y_j).$$
\end{definition}
For convenience, we consider $[x, x)$ to be the singleton $\{x\}$, instead of $\emptyset$.

\begin{definition}[Vertex]
    The vertices of a cell $[x, y)$ are
    \begin{equation}
        \text{Vert}(x, y) := \{x_i + (y_i - x_i)\cdot z \enspace|\enspace z \in \{0,1\}^d\},
    \end{equation}
    where ``$\cdot$'' denotes coordinate-wise vector multiplication.

\end{definition}

\begin{definition}[Cell Partition]
    A cell partition $P_i$ of a bounded value space $\mathbb{V}_i \subseteq \mathbb{R}^d_{\geq0}$ is a set of cells such that each point in $\mathbb{V}_i$  falls into exactly one cell, i.e.,
    $$\forall v_i \in \mathbb{V}_i \, \exists! \, [w_i, w_i') \in P_i \, : \, v_i \in [w_i, w_i').$$
\end{definition}

See Figure~\ref{fig:partitions} for some examples of cell partitions on the 2-dimensional unit cube.
Note that the upper boundary of the cube needs to be covered by cells as well, which is why the partitions include not only 2-dimensional rectangles, but also 1-dimensional lines and even a 0-dimensional point.
{\color{black}
Furthermore, note that cell partitions do not always form regular grids, but can contain cells of different sizes that are not necessarily aligned with each other.
This would be especially relevant if one wanted to adaptively place  control points in higher dimensions, e.g., by iteratively refining a coarse initial partition through subdivision of some of its cells, producing a final partition that is not a grid.
}

\begin{figure}
\centering
\includegraphics[width=0.8\columnwidth]{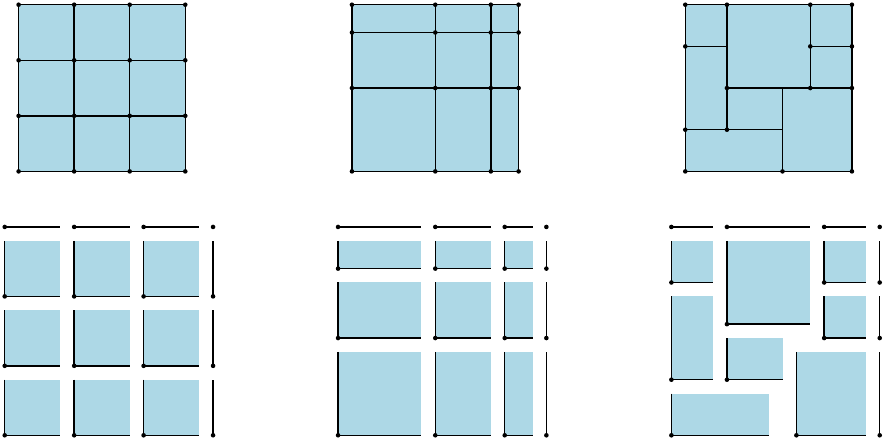}
\caption{Top row: three examples of partitioning the 2-dimensional unit cube. Bottom row: Individual cells of each partition, including lower-dimensional boundary cells. Left: regular grid. Center: non-uniform grid. Right: partition not based on a grid.}
\label{fig:partitions}
\end{figure}

{\color{black}
\begin{definition}[Piecewise Constant Strategy on a Partition]
    Let $P_i$ be a cell partition of bidder $i$'s value space  $\mathbb{V}_i \subseteq \mathbb{R}^d_{\geq0}$.
    A strategy $s_i$ is piecewise constant on $P_i$ if every value has the same bid as the lower corner of the cell it belongs to, i.e.,
    $$\forall [w, w') \in P_i \, \forall v_i \in [w, w') \, : \, s_i(v_i) = s_i(w).$$
    \label{def:PWC}
\end{definition}
}

With all definitions properly in place, we are now ready to state and prove our main result, 
{\color{black}which is an exact multi-dimensional analogue of Theorem~\ref{thm:epsbound}.}

\begin{theorem}
\label{thm:epsbound2}
    {\color{black}
    Consider a CA satisfying Assumption~\ref{asm:QuasiLinear} (linear utilities), Assumption~\ref{asm:BoundedValues} (bounded value spaces), and Assumption~\ref{asm:IndependentValues} (independently distributed valuations).
    Let $s^*$ be a strategy profile where each strategy $s_i^*: \mathbb{R}^d_{\geq0}\mapsto \mathbb{R}_{\geq0}^r$ is piecewise constant on a partition $P_i$ of bidder $i$'s value space $\mathbb{V}_i \subseteq \mathbb{R}^d_{\geq0}$. Then $s^*$ is an $\epsilon$-BNE with
    }
    \begin{equation}
        \label{thmformula2}
        \epsilon \leq \max_{i \in N} \max_{[w_i^\text{min}, w_i^\text{max}) \in P_i} \, \, \,
        \max_{w_i \in \text{Vert}(w_i^\text{min}, w_i^\text{max})} \, \, \,
        \bru_i\left(w_i\right) \, \, - \, \, \eu_i \! \left(w_i, s_i^* \! \left(w_i^\text{min}\right)\right).
    \end{equation}
\end{theorem}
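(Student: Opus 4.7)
The plan is to reduce Theorem~\ref{thm:epsbound2} to a purely geometric fact about convex functions on boxes, by exploiting the two structural properties already established in Lemmas~\ref{lem:utilplane} and~\ref{lem:bru_convexity}. First, I would fix a bidder $i$ and any valuation $v_i \in \mathbb{V}_i$. By definition of a cell partition, there is a unique cell $[w_i^\text{min}, w_i^\text{max}) \in P_i$ containing $v_i$, and by Definition~\ref{def:PWC} bidder $i$'s bid at $v_i$ equals $s_i^*(w_i^\text{min})$. The utility loss at $v_i$ is thus
\begin{equation*}
    \loss_i\bigl(v_i, s_i^*(v_i)\bigr) \;=\; \bru_i(v_i) \;-\; \eu_i\bigl(v_i,\, s_i^*(w_i^\text{min})\bigr).
\end{equation*}

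Next I would invoke the two lemmas. Lemma~\ref{lem:utilplane} implies that, since the bid $s_i^*(w_i^\text{min})$ is held fixed as $v_i$ varies over the cell, the mapping $v_i \mapsto \eu_i(v_i, s_i^*(w_i^\text{min}))$ is linear (hence both convex and concave). Lemma~\ref{lem:bru_convexity} says $\bru_i(v_i)$ is convex in $v_i$. Consequently, the utility loss, viewed as a function of $v_i$ over the cell, is a difference of a convex and a linear function and is therefore itself convex.

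The final step is purely geometric: the closure of $[w_i^\text{min}, w_i^\text{max})$ is a compact axis-aligned box, a bounded polytope whose set of extreme points is exactly $\text{Vert}(w_i^\text{min}, w_i^\text{max})$. A standard fact of convex analysis is that a convex function attains its supremum over a compact convex set at an extreme point; in particular,
\begin{equation*}
    \sup_{v_i \in [w_i^\text{min}, w_i^\text{max})} \loss_i\bigl(v_i, s_i^*(v_i)\bigr)
    \;\leq\;
    \max_{w_i \in \text{Vert}(w_i^\text{min}, w_i^\text{max})} \Bigl[\bru_i(w_i) - \eu_i\bigl(w_i,\, s_i^*(w_i^\text{min})\bigr)\Bigr].
\end{equation*}
Taking the maximum of these cellwise bounds over all cells in $P_i$ and over all bidders $i \in N$ yields the right-hand side of (\ref{thmformula2}), which by Definition~1 certifies $s^*$ as an $\epsilon$-BNE.

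I expect the main subtlety (though not a real obstacle) to be the half-open nature of the cells: the vertices $w_i^\text{max}$ and the mixed corners do not literally belong to $[w_i^\text{min}, w_i^\text{max})$, so strictly speaking one bounds the supremum over the cell by the maximum over its closure. This is harmless because the convex utility loss is bounded on the bounded value space (Assumption~\ref{asm:BoundedValues}) and continuous on the relative interior of any face of the box, so its supremum over the half-open cell is dominated by its maximum over the compact closure. A short remark to this effect will close the argument; the one-dimensional statement of Theorem~\ref{thm:epsbound} then follows as the special case $d=1$ where the cells are intervals $[w_i^{j-1}, w_i^j)$ with two vertices.
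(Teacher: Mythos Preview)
Your argument is correct and follows the same skeleton as the paper's proof: fix a bidder and a cell, use that the equilibrium bid is constant on the cell so the equilibrium utility is linear (Lemma~\ref{lem:utilplane}), use convexity of $\bru_i$ (Lemma~\ref{lem:bru_convexity}), and conclude that the loss is maximized at a vertex of the cell. The one difference is how you reach that last step. The paper triangulates $\text{Vert}(w_i^\text{min}, w_i^\text{max})$ into simplices, bounds $\bru_i$ from above on each simplex by the linear secant through its vertices, and then argues that the difference of two linear functions attains its maximum at a simplex vertex (which in turn is a vertex of the box). You instead observe directly that the loss is a convex function on the closed box and invoke the fact that a convex function on a polytope is maximized at an extreme point. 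Your route is more economical, since it skips the triangulation machinery entirely; the paper's version is more hands-on and makes the role of the secant bound explicit, which mirrors the one-dimensional intuition in Figure~\ref{fig:proofbypicture}. Both arguments rely on the same underlying fact (any point in the box is a convex combination of its vertices), so the difference is one of presentation rather than substance. Your handling of the half-open cells via passing to the closure is fine, since the function $v_i \mapsto \bru_i(v_i) - \eu_i(v_i, s_i^*(w_i^\text{min}))$ is well-defined and finite on the full closed box regardless of which cell each boundary vertex formally belongs to.
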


\begin{proof}
    To establish that $s^*$ is an $\varepsilon$-BNE, we need to show that
    \begin{equation}
        \forall i \, \forall v_i : \enskip \epsilon \geq \loss_i(v_i, s_i^*(v_i)).
    \end{equation}
    Consider an arbitrary bidder $i$ and valuation $v_i$. Let $[w_i^\text{min}, w_i^\text{max})$ be the unique cell in $P_i$ with $v_i \in [w_i^\text{min}, w_i^\text{max})$.
    We triangulate $\text{Vert}(w_i^\text{min}, w_i^\text{max})$ with a set of simplices that cover the entire cell \cite{haiman1991simple}.
    This means that $v_i$ is contained in some simplex $Z$ with vertices $\{z_i^1, \ldots, z_i^{r+1}\} \subseteq \text{Vert}(w_i^\text{min}, w_i^\text{max})$.
    We have that $v_i$ is a convex combination of the $z_i^k$, i.e.,
    \begin{equation}
        \exists \lambda^1, \ldots, \lambda^{r+1} \in [0,1]: \sum_{k=1}^{r+1} \lambda^k = 1 \enskip\land\enskip v_i = \sum_{k=1}^{r+1} \lambda^k z_i^k.
    \end{equation}
    It follows from convexity of $\bru_i$ that
    \begin{equation}
        \bru_i(v_i) = \bru_i \left( \sum_{k=1}^{r+1} \lambda^k z_i^k \right) \leq \sum_{k=1}^{r+1} \lambda^k \bru_i(z_i^k) =: f(v_i),
    \end{equation}
    and because strategies are piecewise constant, we have that
    \begin{equation}
        \eu_i(v_i, s_i^*(v_i)) = \eu_i(v_i, s_i^*(w_i^\text{min})) =: g(v_i).
    \end{equation}
    Note that both $f$ and $g$ are linear functions over $Z$.
    The utility loss $\loss_i(v_i, s_i^*(v_i))$ is at most the difference $f(v_i) - g(v_i)$, which is also linear and thus attains its maximum at one of $Z$'s vertices.
    Note that at a vertex $z_i^k$, $f(z_i^k) = \bru_i(z_i^k)$, and by construction $z_i^k \in \text{Vert}(w_i^\text{min}, w_i^\text{max})$.
    Therefore, the term $f(z_i^k) - g(z_i^k)$ corresponding to each vertex of $Z$ is included in the maximization of (\ref{thmformula2}) which implies that the utility loss at each valuation contained in the simplex $Z$ is smaller than $\epsilon$.\end{proof}


{\color{black}

\subsection{Efficiently Computing the Verified $\bm{\epsilon}$}
\label{sec:theoremApplication}

Having proven a theoretical upper bound on $\epsilon$, we now show how our algorithm computes this upper bound in an efficient way in the verification phase. As a first step, this requires us to have a strategy profile with piecewise constant strategies. Our BNE candidate $s$ does not fulfill this requirement, as the strategies are constructed through piecewise \emph{linear} interpolation in the search phase of our algorithm.
To deal with this, Algorithm~\ref{alg:IBR} includes the \convertStrategies subroutine (Line 16), which takes an arbitrary strategy profile and converts all strategies therein into piecewise constant strategies. To convert each strategy $s_i$, we first define a cell partition $P_i$ and evaluate $s_i$ at all cell vertices of this partition. It is important that this partition $P_i$ is based on a   grid, as will become clear in the following paragraph; in our implementation we use a regular grid. We then construct the piecewise constant strategy $s_i^*$ such that the bid for a given valuation $v_i \in [w_i, w_i')$ is equal to $s_i(w_i)$ (i.e., the original bid at the lower corner of the cell that contains $v_i$).
Finally, even though our algorithm searches in the space of piecewise linear strategies, it returns a pair $(s^*, \epsilon)$, where $s^*$ is an $\epsilon$-BNE which only contains piecewise constant strategies. This way, we eliminate any ambiguity about the exact strategy profile to which the $\epsilon$ bound applies.}

{\color{black}
Now that we have a strategy profile $s^*$ consisting of piecewise constant strategies, we next compute the corresponding $\epsilon$ bound according to Equation~\eqref{thmformula2} in Theorem~\ref{thm:epsbound2}. This corresponds to the \verification subroutine (Line 17) in Algorithm~\ref{alg:IBR}. Note that Equation~\eqref{thmformula2} is a maximization over terms of the form:   $\bru_i\left(w_i\right) -  \eu_i \! \left(w_i, s_i^* \! \left(w_i^\text{min}\right)\right)$.  When the value space is $d$-dimensional,  there are $2^d$ such terms per cell. Fortunately, when computing all these terms, we do not need to evaluate each of them individually (which would be very expensive); instead our algorithm only requires \textit{one} pointwise best response computation per grid point. Thus, the number of pointwise best response calls made by our verification method is linear in the number of grid points. Note that this is also the same number of pointwise best response calls made during one full best response iteration (at the same grid size). Concretely, we save on evaluations of terms in Equation~\eqref{thmformula2}  by observing two facts: the first sub-term (i.e., $\bru_i(w_i)$) is repeated many times (once for each cell that has $w_i$ as a vertex), and can thus simply be cached and re-used.
Furthermore, for the second  sub-term (i.e., $\eu_i \! \left(w_i, s_i^* \! \left(w_i^\text{min}\right)\right)$), we can
exploit that  the expected utility is linear in the valuation (Lemma~\ref{lem:utilplane}), to jointly compute this sub-term for all vertices of a cell $[w_i^\text{min}, w_i^\text{max})$.
Specifically, we only compute the expected utility  $\eu_i \! \left(w_i^\text{min}, s_i^* \! \left(w_i^\text{min}\right)\right)$ at a single vertex $w_i^\text{min}$, while now also deriving and storing its linear coefficients (which are just the probabilities of winning each bundle).
The result of this computation can then be easily extrapolated to all other vertices of the cell. Note that our algorithm  requires some small amount of bookkeeping, but  the computational cost of this is insignificant compared to the cost of the pointwise best response computations themselves.

%
}

Note that Theorem~\ref{thm:epsbound2} can be applied no matter how we come up with the $\epsilon$-BNE candidate $s$.
In this paper, we use an iterated best response algorithm, but this could be exchanged for any other equilibrium finding procedure.
For example, our verification step can also be applied to compute the incentives to deviate from truth-telling under a given mechanism, by simply using the truthful strategy profile as the ``$\epsilon$-BNE candidate.''

While we have focused on the application of our results to infinite games, Theorem~\ref{thm:epsbound2} can easily handle finite, but very large value and action spaces. In either case, it is infeasible to perform computations for each individual value or action, which is where our approach shines.
The key ingredient that makes our theory work is that for a fixed action, the payoffs in the auction game are linear in the valuation.

\subsection{An  Estimate of $\bm\varepsilon$}
\label{sec:estimated_epsilon}

Theorems~\ref{thm:epsbound} and~\ref{thm:epsbound2} only hold under certain conditions, given as Assumptions \ref{asm:QuasiLinear}, \ref{asm:BoundedValues} and \ref{asm:IndependentValues} above.
These assumptions do not hold for all CAs that might be of interest: in particular, half of our set of \numSettings \ \LLG settings have correlated valuations (i.e, $\gamma=0.5$), which violates Assumption~\ref{asm:IndependentValues}.
For such auction instances, we fall back on a simpler verification procedure, which consists of computing pointwise best responses on a densely spaced grid of valuations, and estimating $\epsilon$ as the highest utility loss observed among these valuations.
This $\epsilon$ estimate is obviously a lower bound for $\epsilon$.
However, as mentioned earlier, the maximum is not a well-behaved (smooth) statistic, so it is not clear if our $\epsilon$ estimate is close to the true $\epsilon$ or not.

In practice, we have found that this heuristic is a sensible fallback for domains where our theoretical assumptions do not hold.
{\color{black}
The worst case scenario for the heuristic is when two valuations that are very close to each other in value space have very different utility losses.
In the auction games we have studied so far, this situation does not seem to arise.\footnote{\color{black}While we have never observed such a scenario, it is theoretically possible. For some joint value distributions, a small change in bidder $i$'s valuation might imply a very large shift in the marginal distribution of other bidders' bids, due to correlations between bidders' valuations. In this case, $i$'s utility could drastically change, even when $i$ himself does not change his bid. Furthermore, some payment rules might amplify this effect (e.g., by having discontinuities), such that even a small shift in other bidders' marginal distributions can have a large impact on $i$'s utility. Therefore, any bound akin to Theorem~\ref{thm:epsbound2} for correlated settings would need to be domain specific, placing restrictions both on the payment rule and the correlation structure between bidders' valuations.}
}%
We reinforce this claim by comparing both our verification methods with each other in Section~\ref{sec:upper_vs_lower_bound}.

\begin{remark}
Recall that, in the runtime experiments performed in Section~\ref{sec:compBestResponses} in the \LLG domain, we required an implementation of the verification phase as part of our experimental setup (to accurately measure the performance of the search phase).
In those experiments, half the auction settings have correlated valuations, and thus Theorem~\ref{thm:epsbound2} is not applicable.
To keep our experimental set-up simple and consistent, we chose to always use the heuristic method described above to compute an estimated $\epsilon$, even for settings where our theoretical assumptions hold.
For this, we used a grid of $\numCtrlPtsNaiveStopping$ evenly spaced verification points, which is a very large increase in precision when compared to only $\numCtrlPtsAdaptive$ adaptively placed control points used during search.
\end{remark}

\subsection{The Theoretical Bound on $\bm\varepsilon$ vs.\ the Estimate of $\bm\varepsilon$}
\label{sec:upper_vs_lower_bound}

We have now established the theory necessary to get an \emph{upper bound} on the utility loss, but this would not be very useful if the bound was far from the true $\epsilon$.
It is thus natural to ask: how tight is our bound really?
We can answer this question using our $\epsilon$ estimate, which is a \emph{lower bound} on the utility loss, since it is derived by considering a finite subset of all valuations.
We consider the 8 of our 16 \LLG auction settings to which Theorem~\ref{thm:epsbound2} applies, namely those with independently distributed valuations ($\gamma=0$).
For each setting, we take the strategy profile resulting from a full run of our algorithm and compare the lower and upper bounds on $\epsilon$  using the same number of verification points in both cases.

\begin{figure}
\centering
\includegraphics[width=0.7\columnwidth]{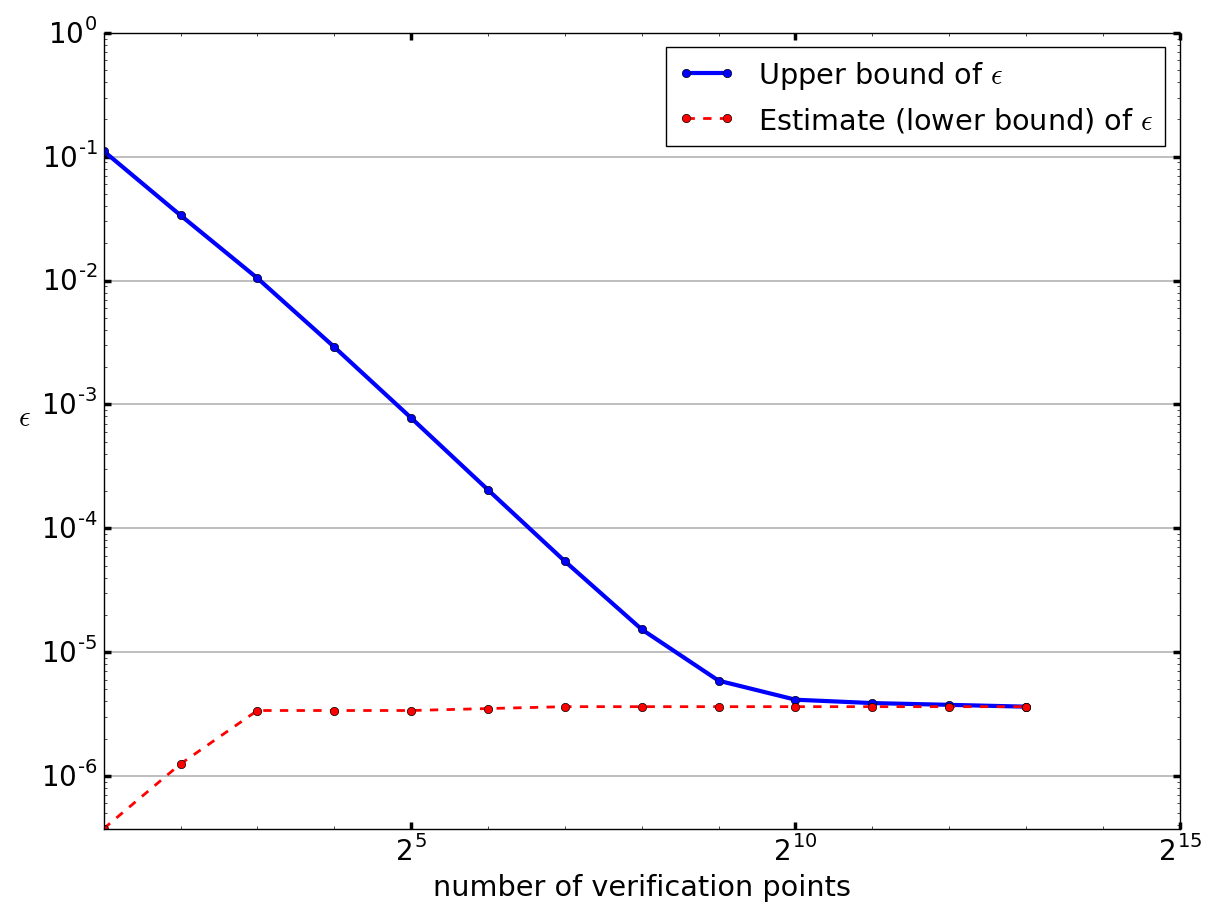}
\caption{Estimate (lower bound) and upper bound for $\epsilon$ of the $\epsilon$-BNEs obtained by our algorithm, averaged over eight different auction settings with uncorrelated bidder valuations. The estimate is computed as described in Section~\ref{sec:estimated_epsilon}, while the upper bound is given by Theorem~\ref{thm:epsbound2}.}
\label{fig:errorplot}
\end{figure}

The result is shown in Figure~\ref{fig:errorplot}, with the number of verification points varying from $2$ to $2^{13}$, and the bounds being the average over the 8 auction settings.
We observe that the lower bound remains practically constant, while the upper bound decreases polynomially in the number of verification points until converging.\footnote{Note that a polynomial relationship corresponds to a straight line on the plot, since it is drawn on a log-log scale.}
At $2^{13}$ verification points, the upper and lower bounds match within machine precision.
Even with as few as $2^9$ verification points, the gap between the two bounds already becomes extremely small.
The results are qualitatively the same when considering each auction separately.

Note that the true $\epsilon$ always lies between the upper and  lower bounds. Because the two bounds converge towards each other, this shows that both bounds are very close to the true $\epsilon$ with sufficiently many verification points. Since we used $\numCtrlPtsNaiveStopping$ verification points in our experiments in Section~\ref{sec:compBestResponses}, this strongly suggests that the $\epsilon$ estimate we reported there was indeed a very good estimate.


\begin{remark}
\color{black}

Note that the accuracy of the $\varepsilon$ obtained by either of our verification methods also depends on how we parameterize the pointwise best response algorithm during the verification phase. 
Concretely, the parameters we have to choose that affect the accuracy of the pointwise best responses are the number of MC samples and the pattern search configuration. Thus, the $\epsilon$ obtained must always be interpreted in light of the parameters used, and any reporting of results must include these parameters.
In our work, we take special care to set the  relevant parameters conservatively.
Specifically, in LLG, we use twice as many Monte Carlo samples for verification as for search (i.e., $20,000$ instead of $10,000$). Furthermore,  we increase the budget of steps for pattern search from $12$ to $20$, which implies that our pattern search can perform up to 20 halvings of the initial pattern size and can thus achieve a search resolution in the action space on the order of $10^{-6}$. In Section~\ref{sec:compAnalyticResults}, we present an empirical convergence analysis, which shows that we set our parameters high enough in relation to our target  $\epsilon$.

\end{remark}
\section{Putting it all Together}
\label{sec:puttingitalltogether}

Now that we have detailed specifications of both the search and verification phases, we can put them together into a full algorithm.
Recall that, for the design of the search phase, we focused on the performance of the inner loop of the algorithm, to keep our experimental set-up simple (Section~\ref{sec:compBestResponses}). We then designed the verification phase (Section \ref{sec:provenErrorBound}). Now we analyze the performance of the algorithm as a whole, which includes the decision regarding when to transition from search to verification.

\subsection{Transitioning from Search to Verification}

        Recall that our BNE algorithm has a target $\epsilon$ it is trying to achieve.
During the search phase, the algorithm only maintains an estimate $\tilde{\epsilon}$ of the true current $\epsilon$, but does not yet have an upper bound on it.
Thus, if the algorithm spends too little computational effort in search and stops early, there is a high risk that the verification phase will show that the target $\epsilon$ has not in fact been reached. On the other hand, if the algorithm expends lots of effort in the search phase, this risk can be minimized. The question thus is how long to search, and when exactly to transition from search to verification.

The amount of computational effort spent is primarily driven by the number of control points used, and this in turn determines to a large degree the accuracy of $\tilde{\epsilon}$  which the search phase uses for its decision regarding when to transition to verification. This motivates our design of two nested loops, with an inner and an outer loop. 
 When the inner loop converges, a best response calculation with higher precision is performed in the outer loop, increasing the accuracy of the strategy update and of the $\tilde{\epsilon}$ calculation.
If this increase in accuracy still leads to an acceptably small $\tilde{\epsilon}$, the algorithm proceeds to verification. Otherwise, the inner loop is resumed, and we require at least two further iterations
before breaking out of the inner loop again.
In this way, the outer loop acts as a gate between the inner loop and the verification phase, only letting through a strategy profile that ``generalizes'' to higher precision parameters. 
For this to work properly, we eschew adaptive control points when in the outer loop, instead using evenly spread out control points densely covering the entire value space.
Furthermore, to account for the lower accuracy of the inner loop, we consider the inner loop to be converged only when an $\tilde{\epsilon}$ less than or equal to $0.8$ times our target $\epsilon$ is reached.

With this transition from search to verification in place, we can now measure the full runtime of the algorithm from start to end, i.e., from the (truthful) starting strategy profile to a verified $\epsilon$-BNE.
For this experiment only, we choose the number of control points in the outer loop to match those in the verification phase.
This simplifies interpreting our runtime results, because this set of parameters achieves  an $\epsilon$ below our target in all \numSettingsTimesRuns out of \numSettingsTimesRuns runs, and thus we don't have to account for the failure rate of the algorithm.
In practice, tuning the configuration of the outer loop involves a classical speed / accuracy trade-off, and it is up to the user of our BNE algorithm to make this trade-off appropriately.

\begin{figure}
\centering
\includegraphics[width=1.0\columnwidth]{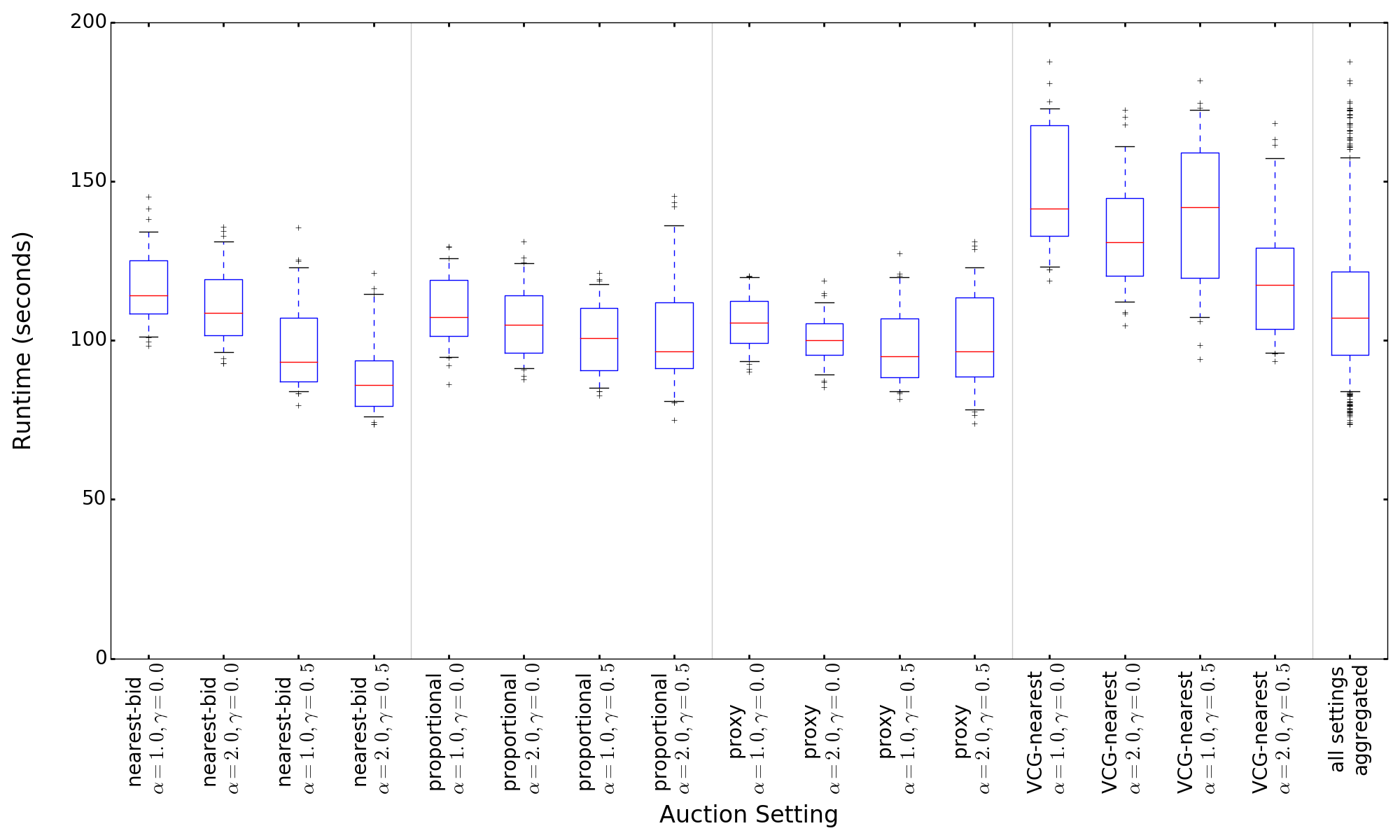}
\caption{Runtimes for the whole algorithm in \LLG including both the search and verification phases, for \numRuns runs on each of our 16 auction settings. All runs have been verified with $\epsilon < \targetEpsilon$. The  box plots provide the first, second and third quartiles; the whiskers are at the 5th and 95th percentile.}
\label{fig:finalruntimes}
\end{figure}

The runtimes of all \numSettingsTimesRuns runs are shown in Figure~\ref{fig:finalruntimes}.
Across all rules and settings, the median runtime of our whole algorithm is $107.1$ seconds (see the right-most box plot in the figure). Of course, there is some variance in the runtimes, both across our 16 settings and within each setting. However, overall, the runtimes are remarkably consistent, with 90\% of all auction instances completing within 83.9 and 157.5 seconds. Thus, a user of our BNE algorithm can expect similar runtimes when applying our  algorithm to similar settings. 

\subsection{Convergence Analysis}
\label{sec:compAnalyticResults}

\begin{table}
\newcommand{\numtable}[1]{\customlabeloutput{\num[round-precision=4]{#1}}}
\newcommand{\numlabel}[2]{\customlabel{#1}{\num[round-precision=4]{#2}}\numtable{#2}}
\begin{tabular}{l||r|r||c}
\multicolumn{1}{c||}{\textbf{Mechanism}} & \multicolumn{2}{c||}{\textbf{Standard Deviation of} $\epsilon$} & \multicolumn{1}{c}{\textbf{Max $L_\infty$ to}} \\
\multicolumn{1}{c||}{} &
\multicolumn{1}{c|}{\textbf{Search}} &
\multicolumn{1}{c||}{\textbf{Verification}} &
\multicolumn{1}{c}{\textbf{Analytical BNE}} \\
\hhline{=::=:=::=}
Nearest-Bid, $\alpha=1.0$, $\gamma=0.0$  & \numtable{4.272830e-07}           & \numtable{2.336476e-07} & \numtable{ 0.002749}        \\
Nearest-Bid, $\alpha=1.0$, $\gamma=0.5$  & \numtable{1.236745e-07}           & \numtable{6.342637e-08} & \numtable{ 0.001433}        \\
Nearest-Bid, $\alpha=2.0$, $\gamma=0.0$  & \numtable{4.401851e-07}           & \numtable{1.876482e-07} & \numtable{ 0.002450}        \\
Nearest-Bid, $\alpha=2.0$, $\gamma=0.5$  & \numtable{1.230580e-07}           & \numtable{6.661414e-08} & \numtable{ 0.001578}        \\
Proportional, $\alpha=1.0$, $\gamma=0.0$ & \numtable{1.245400e-09}           & \numtable{2.080031e-09} & \numtable{ 0.001442}        \\
Proportional, $\alpha=1.0$, $\gamma=0.5$ & \numtable{3.282607e-09}           & \numtable{2.819466e-09} & \numtable{ 0.000879}        \\
Proportional, $\alpha=2.0$, $\gamma=0.0$ & \numtable{1.658743e-09}           & \numtable{2.559018e-09} & \numtable{ 0.001536}        \\
Proportional, $\alpha=2.0$, $\gamma=0.5$ & \numtable{1.122485e-07}           & \numtable{3.906875e-09} & \numtable{ 0.001111}        \\
Proxy, $\alpha=1.0$, $\gamma=0.0$        & \numtable{1.414358e-09}           & \numtable{1.604076e-09} & \numtable{ 0.002520}        \\
Proxy, $\alpha=1.0$, $\gamma=0.5$        & \numlabel{StdevMax}{5.541368e-07} & \numtable{5.038464e-09} & \numtable{ 0.001557}        \\
Proxy, $\alpha=2.0$, $\gamma=0.0$        & \numtable{2.058928e-07}           & \numtable{1.561789e-09} & \numlabel{AnMax}{ 0.003865} \\
Proxy, $\alpha=2.0$, $\gamma=0.5$        & \numtable{5.306907e-09}           & \numtable{4.021574e-09} & \numtable{ 0.001728}        \\
Quadratic, $\alpha=1.0$, $\gamma=0.0$    & \numtable{1.285556e-09}           & \numtable{1.982474e-09} & \numtable{ 0.001442}        \\
Quadratic, $\alpha=1.0$, $\gamma=0.5$    & \numtable{2.618693e-09}           & \numtable{2.780691e-09} & \numtable{ 0.000881}        \\
Quadratic, $\alpha=2.0$, $\gamma=0.0$    & \numtable{1.854834e-09}           & \numtable{2.314903e-09} & \numtable{ 0.001522}        \\
Quadratic, $\alpha=2.0$, $\gamma=0.5$    & \numtable{1.595269e-07}           & \numtable{3.701131e-09} & \numtable{ 0.001111}        \\
\end{tabular}
\caption{Robustness measures of our algorithm in the LLG setting. The
  second and third columns show how much randomness  is introduced by Monte Carlo
  integration in the search and verification phase, respectively, measured as the
  standard deviation of $\epsilon$ over \numRuns runs each.  The
  forth column shows the maximum $L_\infty$ distance between the known analytical
  BNEs and our
  $\varepsilon$-BNEs ($\epsilon=\targetEpsilon$), also measured over \numRuns runs.}
\label{tab:compAnalytical}
\end{table}

When our algorithm has converged to a strategy profile $s^*$ and computed an $\epsilon$, how much certainty should we attribute to this result?
 Given that our algorithm uses random numbers quite extensively, one might worry that  $s^*$ and $\epsilon$ may be subject to large variance. To address this concern, %
we run some additional experiments and show that this randomness only negligibly affects the quality of our results.

However, testing the consistency of the $\epsilon$  is not  straightforward: the $\epsilon$ is ``self-reported'' by the algorithm, and thus any variance in $\epsilon$ could be due to differences in the strategy profiles found in the search phase, or differences in the computation of $\epsilon$ in the verification phase.
To deal with this, we consider the variance of the search and verification phases in isolation from each other, by running two separate experiments for each auction setting.
In the first experiment, we perform \numRuns runs with different random seeds during search, which are then run through verification using a single fixed seed.
This tells us the variance of $\epsilon$ caused by search.
In the second experiment, we again perform \numRuns runs, this time using a fixed random seed during search (thus converging on the exact same strategy profile), and different random seeds during verification.
This tells us the variance of $\epsilon$ caused by verification.

The results are shown in columns two and three of Table~\ref{tab:compAnalytical}.
We observe that the standard deviation of $\epsilon$ is extremely small in all cases, never exceeding \inref{StdevMax}.
This shows that, even though our algorithm is optimized to use as few Monte Carlo samples as possible, we can expect the algorithm's behavior to be very consistent  across different runs.

Now that we have shown that our BNE algorithm converges in a very consistent way, one may wonder about \emph{what} BNE strategy profile it converges to.
Recall that we have analytical results for all of our 16 settings. Thus, we can measure the distance between the true, analytical BNEs and the $\epsilon$-BNEs our algorithm finds. However, we would like to emphasize that minimizing this distance is \emph{not} part of the definition of an $\epsilon$-BNE and therefore not an objective of our algorithm. Nevertheless, some readers may be interested in knowing this distance. To this end, we consider the BNEs resulting from the previous experiment in each setting. In Table \ref{tab:compAnalytical}, we show the maximum (across all \numRuns runs) of the $L_\infty$ distances between the analytical BNEs from \citet{AusubelBaranov2013CoreOldVersion} and the $\targetEpsilon$-BNEs we find.\footnote{Originally, we found a discrepancy between some of the $\varepsilon$-BNEs reached by our algorithm (for Nearest-Bid with $\alpha=2$ and $0 \leq \gamma \leq 1)$  and the analytical BNEs in \citet{AusubelBaranov2013CoreOldVersion}. We contacted the authors, who promptly confirmed a small mistake in their analytical results and quickly provided us with the correct BNE formula for this setting, which is:
%
$s_i(v_i) = 1/\sqrt{8 \cdot (1 - \gamma)} \left( \log(\sqrt{2/(1-\gamma)} + v_i) - \log(\sqrt{2/(1-\gamma)} - v_i) \right)$. 
%
Our comparison in Table \ref{tab:compAnalytical} takes this correction into account.
Such a correction highlights the value of having numerical
techniques for finding BNEs, even in simple domains, as a complement
to analytical methods.}
As we can see, all  analytical BNEs are very close to our $\epsilon$-BNEs, never differing by more than \inref{AnMax}.
This also shows that our $\epsilon$-BNEs are close to each other, further confirming the low variance of our algorithm.

\section{Scaling to Higher Dimensions}
\label{sec:HigherDims}

In this section, we show how to scale our BNE algorithm to higher dimensions.
To this end, we introduce a new domain that is larger than LLG, called LLLLGG, where bidders are multi-minded,  i.e., they have a positive value for more than one bundle.
Numerically finding $\epsilon$-BNEs in this setting is much more complex compared to single-minded domains like  \LLG.
Thus, our algorithm significantly surpasses the previous state of the art, which allows us to analyze  new types of combinatorial interactions between bidders in much larger settings.

\subsection{The \MMLLLLGG Domain}

\begin{minipage}{\textwidth}
\begin{minipage}[b]{0.49\textwidth}
\centering
\begin{tabular}{c||c|c}
\textbf{Bidder} & \textbf{Bundle 1} & \textbf{Bundle 2} \\
\hhline{=#=|=}
$L_1$ & AB   & BC   \\
$L_2$ & CD   & DE   \\
$L_3$ & EF   & FG   \\
$L_4$ & GH   & HA   \\
$G_1$ & ABCD & EFGH \\
$G_2$ & CDEF & GHAB \\
\end{tabular}
\captionof{table}{Bundles of interest of each of the six bidders in the \MMLLLLGG domain with goods A-H. Each bidder is interested in exactly two bundles.}
\label{tab:MMLLLLGG}
\end{minipage}
\hfill
\begin{minipage}[b]{0.49\textwidth}
\centering
\includegraphics[width=0.85\textwidth]{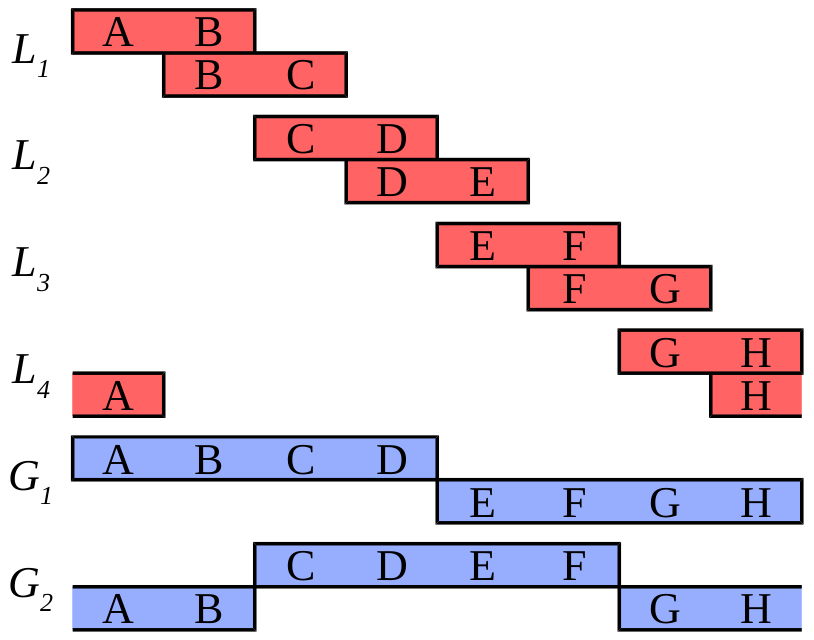}
\captionof{figure}{Graphical representation of bundle overlap in the \MMLLLLGG domain. \\}
\label{fig:MMLLLLGG}
\end{minipage}

\bigskip

\end{minipage}

We now introduce the \MMLLLLGG domain,
which is a synthetic domain along the lines of \LLG, but with significantly increased complexity.
The domain has six bidders and eight goods (labeled $A-H$ for ease of exposition), with each bidder being interested in two bundles as
enumerated in Table~\ref{tab:MMLLLLGG}.
There are four ``local'' bidders $L_1$ - $L_4$ who draw each of their two bundle values from $\Unif[0,1]$, and two ``global'' bidders $G_1$ and $G_2$ who draw their two bundle values from $\Unif[0,2]$; all draws are independent. 
For the global bidders, the two bundles are perfect substitutes (even though they are not overlapping),  such that a bidder will never want to win both bundles at the same time.
For the local bidders, their bundles of interest are partially overlapping such that they can obviously only ever win one of them. We assume straightforward bundle bidding as defined in Section~\ref{sec:formal_model}, and thus each bid $b_i$ consists of $r=2$ atomic bids.
Because the domain exhibits significant symmetries (see Figure \ref{fig:MMLLLLGG}), we can search for
symmetric equilibria where all local bidders have the same strategy and both
global bidders have the same strategy. Thus, a strategy profile is fully described by a pair of strategies $s_{local} : [0,1]^2
\mapsto \mathbb{R}_{\geq 0}^2$ and $s_{global} : [0,2]^2 \mapsto \mathbb{R}_{\geq 0}^2$.

\paragraph{The Source of Complexity.} Note that LLLLGG is larger than \LLG in several aspects: it has more bidders, more goods, each bidder bids on more bundles, and no bidder has a truthful strategy.
It is thus natural to ask how exactly these characteristics influence the difficulty of finding $\epsilon$-BNEs in this domain.
Some effects are very easy and intuitive to understand.
First and most importantly, to find a pointwise best response, a nonlinear optimization problem over the action space $\mathbb{R}_{\geq 0}^r$ must be solved, which is considerably harder when $r=2$ than when $r=1$.
Second, our algorithm's runtime increases linearly with the number of (non-symmetric) bidders, because best responses need to be computed for each of them.
Third, the more atomic bids we have, the harder the implementation of the mechanism itself becomes (e.g., because in general CAs, computing an efficient allocation or computing core payments are both NP-hard problems).
 
However, there is a source of complexity that is much subtler, namely the combinatorial way in which bidders interact.
To this end, we now want to provide some intuition. First, consider a first-price auction with a single good but \emph{many} bidders. In this auction, finding a best response $BR_i$ is not much more difficult than in the same auction with fewer bidders, because only the highest competing bid is relevant for bidder $i$'s utility, and other bids can be safely ignored.\footnote{This idea has recently been formalized by \citet{soumis2019return}, who define a variant of best response dynamics based on \emph{return functions}, which can explicitly model a large number of players by their aggregated effect on a game's outcome. Our algorithm framework is  compatible with this concept, as are most of the techniques we discuss in Section~\ref{sec:compBestResponses}.}
Similarly, having an auction with a large number of goods does not necessarily mean that finding an $\epsilon$-BNE is hard.
For instance, if two goods are perfect complements for all bidders, then this pair of goods will only ever be bid on together, effectively acting as a single good.
It is thus possible for an auction with many goods to have a combinatorial structure that is as simple as \LLG.

The true difficulty of an auction setting is determined by the way in which the bundles that bidders are interested in overlap with each other.
To make this concrete, consider the integral that must be solved each time the expected utility $\eu_i$ is computed.
It is well-known that the convergence rate of Monte Carlo integration is directly proportional to the variance of the function to be integrated, and is independent of the dimension \cite{press2007numerical}.
Thus, the number of Monte Carlo samples we need to ensure convergence only depends on how complex the distribution of bidder $i$'s utility is, not on how many bidders and goods are involved. The $\epsilon$-BNEs we compute next will demonstrate that the LLLLGG domain is indeed complex in this sense, exhibiting several novel and interesting equilibrium features.

\subsection{Experimental Results}

{\color{black}
For our experiments, we consider two auction mechanisms in the LLLLGG domain. Both use the efficient allocation rule, but they use different payment rules, namely \textit{VCG-nearest} and \textit{first-price}. The first-price payment rule is straightforward: It charges each bidder his bid on the bundle he wins.
VCG-nearest is more involved: It produces a payment vector that minimizes the Euclidean distance to the VCG payment vector, under the constraint that the payments are minimum-revenue core-selecting. Informally, this means that the payments have to be large enough to ensure that no coalition of bidders is willing to pay more in total than what the seller receives from the current winners. See \citet{DayRaghavan2007FairPayments} and \citet{DayCramton2012Quadratic} for formal definitions and further discussion of VCG-nearest.

We apply our full BNE algorithm to find $\epsilon$-BNEs for these two payment rules, using the verification method that produces a theoretical upper bound on $\epsilon$ (as described in Section~\ref{sec:provenErrorBound}).
%
The results are shown in Table~\ref{tab:MSDresults}. For each payment rule, we show two runs with different parameterizations. The runs reported in rows one and three target a highly accurate epsilon bound of $0.002$; this is the lowest $\varepsilon$ bound we could target with our algorithm using a reasonable number of core-hours.
The runs reported in rows two and four target a less accurate $\varepsilon$ bound of $0.01$; we include these runs to illustrate the runtime improvement that is possible if a less accurate $\varepsilon$ bound is acceptable.  For each run, we show the parameters that vary between the runs, i.e., the grid sizes and the number of Monte Carlo samples used in the search and verification phases. For all runs, as the pattern in pattern search, we use a grid of $3\times3$ points, with an initial grid spacing of $0.1$, and a budget of 8 steps in search and 12 steps in verification.}\footnote{Note that we do not use importance sampling and adaptive control points for the experiments described in this section, as these two techniques are not yet implemented in our code base for higher-dimensional value spaces. Of course, our algorithm would only get faster with these additions.}

\begin{table}[t]
\centering
\setlength\tabcolsep{2.3pt}
\newcommand{\numtable}[1]{\customlabeloutput{\num[round-precision=4]{#1}}}
\newcommand{\numtabletwo}[1]{\customlabeloutput{\num[round-precision=1]{#1}}}
\newcommand{\numlabel}[2]{\customlabel{#1}{\num[round-precision=4]{#2}}\numtable{#2}}

\begin{tabular}{l||c|c|c|r|r||r|r|r|c}
\bf Mechanism & \multicolumn{3}{c|}{\bf Grid Size} & \multicolumn{2}{c||}{\textbf{MC Samples}} & \multicolumn{3}{c|}{\bf Runtime (core-hours)} & \multirow{2}{1.9cm}{\bf \centering Upper Bound on $\epsilon$}  \\
 &  \bf Inner & \bf Outer  & \bf Verif. &   \textbf{Search}  &  \textbf{Verif}.  &  \bf Search  &  \bf Verif.  &  \bf Total  &   \\

\hhline{=::=:=:=:=:=::=:=:=:=}
VCG-nearest    &  $15\!\times\!15$  &  $25\!\times\!25$  &  $35\!\times\!35$ &  100,000  & 200,000  &  \numtabletwo{959.6981}  &  \numtabletwo{2960.9656}  &  \numtabletwo{3920.6638}  &   \numlabel{LLLLGGepsBoundQuadPrecise}{0.00185}  \\

VCG-nearest    &  $8\!\times\!8$  &  $12\!\times\!12$  &  $ 16\!\times\!16$  &  20,000  &  40,000  &  \numtabletwo{38.0268}  &  \numtabletwo{120.3453}  &  \numtabletwo{158.3721}  &   \numlabel{LLLLGGepsBoundQuadLoose}{0.009055}    \\

First-price    &  $50\!\times\!50$  &  $75\!\times\!75$  &  $100\!\times\!100$    &  100,000  & 200,000  &  \numtabletwo{82.5782}  &  \numtabletwo{115.19}  &  \numtabletwo{197.77}  &   \numlabel{LLLLGGepsBoundFPPrecise}{0.00190}    \\

First-price    &  $15\!\times\!15$  &  $20\!\times\!20$  &  $25\!\times\!25$ &  20,000  &  40,000  &  \numtabletwo{1.4024}  &  \numtabletwo{1.7885}  &  \numtabletwo{3.191}  &   \numlabel{LLLLGGepsBoundFPLoose}{0.009733}

\end{tabular}
\caption{Results of our BNE algorithm applied to VCG-nearest and first-price in the \MMLLLLGG domain (each row corresponds to one  run of the BNE algorithm). For each payment rule, we show two runs, one parameterized to achieve a small $\varepsilon$ bound and one parameterized to achieve a looser $\varepsilon$ bound.}\label{tab:MSDresults}
\end{table}

{\color{black}As the  outcome, we report the runtime in \textit{core-hours} and the obtained upper bound on $\varepsilon$.\footnote{Note that the runtime for VCG-nearest is orders of magnitude higher than that for first-price. This is to be expected, because the payment rule itself requires solving a quadratic program to find prices satisfying all core constraints, which is very expensive, even employing constraint generation \cite{DayRaghavan2007FairPayments,Buenz2015AFasterCCGAlgorithm}.  {\color{black} Also note that, in this experiment, we parallelized our BNE algorithm by performing the pointwise best response computations for all grid points in parallel on different cores of a large compute cluster.  However, we report the runtime in core-hours in a linearized way, i.e., as if the whole algorithm had been run sequentially on a single machine.}} As we see in rows one and three, our parameterizations for the highly accurate runs lead to an upper bound
on $\varepsilon$ of $0.0019$. Note that we used a  large number of MC samples for verification (i.e., \numSamplesMSDVerif) such that we have  high  confidence in the obtained $\epsilon$ bound.\footnote{\color{black}Based on our experience, we believe that the number of MC samples we used in verification is a conservative choice that produces a low-variance $\varepsilon$. In fact, our algorithm's performance in LLLLGG can likely be improved by using fewer MC samples. However, given that, in this domain, a detailed analysis of how the number of MC samples affects the variance of the $\varepsilon$ is quite costly, we leave this to future work. } No prior work has been able to compute an $\varepsilon$-BNE with this accuracy for a domain this complex.

As expected, we see that the runtime is significantly lower for the less accurate runs than for the more accurate runs.
This reduction in runtime is driven by multiple factors. First, when targeting a higher $\varepsilon$ bound, we can parameterize the search phase with coarser grid sizes and fewer MC samples. Second,  to verify an $\varepsilon$-BNE with a larger $\epsilon$ bound, a coarser grid in the verification phase suffices, and we also need fewer MC samples because a larger $\varepsilon$ bound can tolerate less precise pointwise best response computations. Given that the verification phase is the most expensive part of the BNE algorithm, the largest runtime gains come from the smaller grid size and the smaller number of MC samples in that phase.}

The final strategy profiles found for the highly accurate runs (with an $\varepsilon$ of 0.0019) for both VCG-nearest and first-price are depicted in Figures~\ref{fig:msdeqQuad} and~\ref{fig:msdeqFP}, respectively.
To enhance visual clarity, we show the equilibrium strategies in their piecewise bilinear form, before being converted to piecewise constant, as is required for our upper bound (see Section~\ref{sec:theoremApplication}).
We now discuss these strategies in more detail.

\begin{figure}[p]
\centering
\includegraphics[width=0.8\textwidth,trim={0 1.3in 0 0.7in},clip]{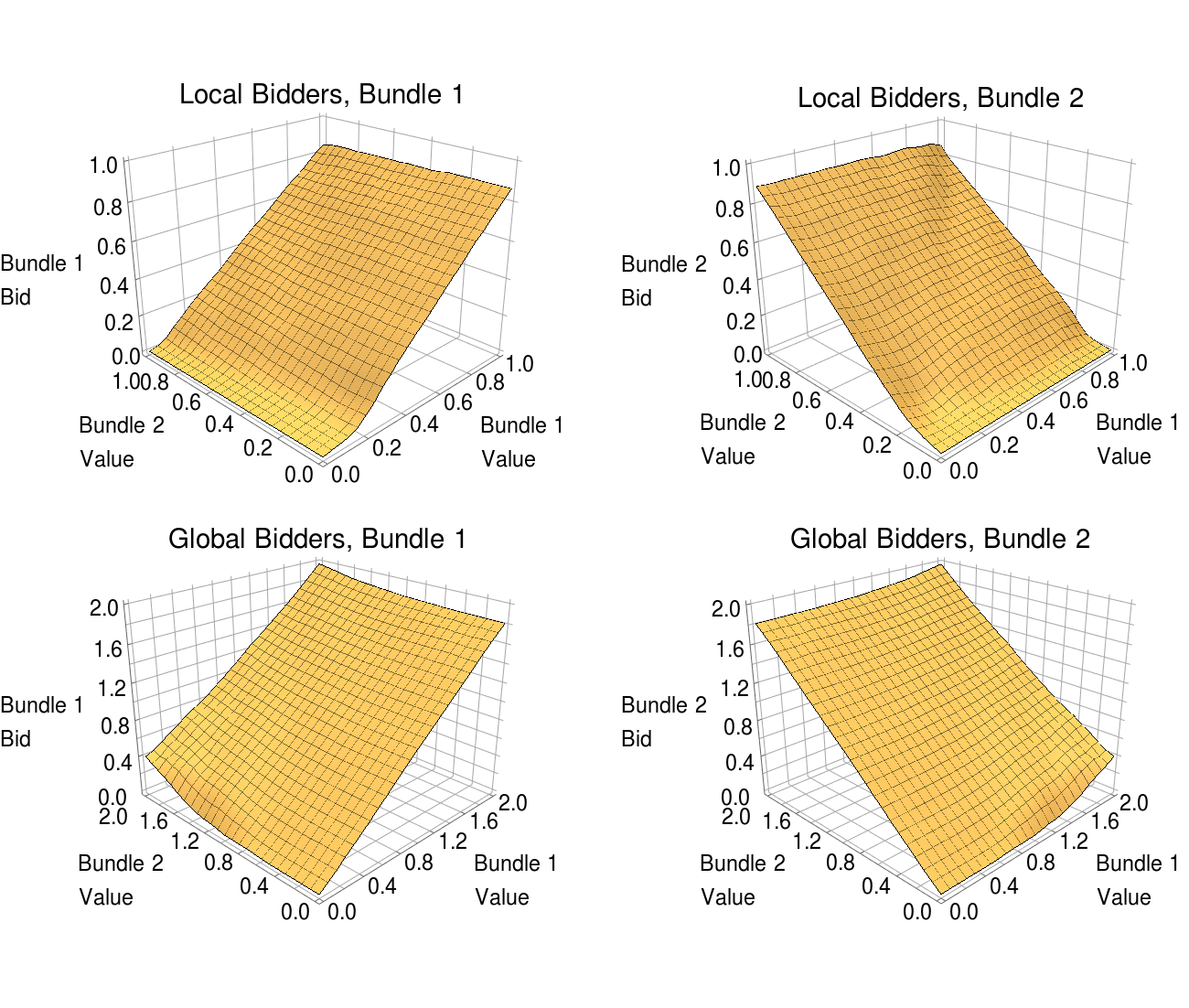}
\caption{ \inref{LLLLGGepsBoundQuadPrecise}-BNE of the \MMLLLLGG domain for the VCG-nearest payment rule. The top two plots show the
  BNE strategies for the local bidders for the two bundles they bid on, and the bottom two show the same for the global bidders.}
\label{fig:msdeqQuad}
\end{figure}

\begin{figure}[p]
\centering
\adjincludegraphics[width=0.8\textwidth,trim={0 1.3in 0 1.2in},clip]{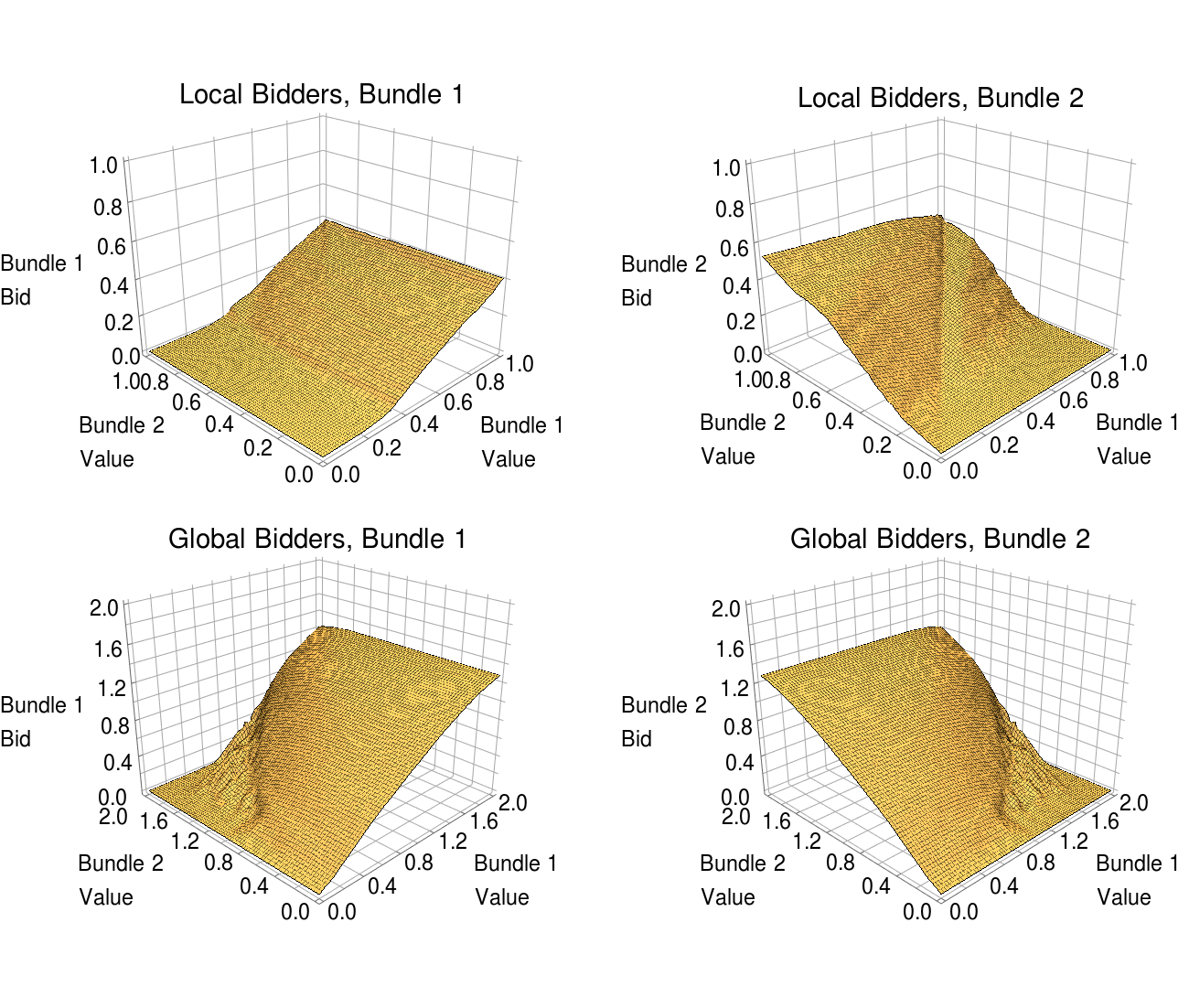}
\caption{\inref{LLLLGGepsBoundFPPrecise}-BNE of the \MMLLLLGG domain for the first-price payment rule. The top two plots show the
  BNE strategies for the local bidders for the two bundles they bid on, and the bottom two show the same for the global bidders.}
\label{fig:msdeqFP}
\end{figure}
\paragraph{Equilibrium of VCG-nearest.}

In Figure~\ref{fig:msdeqQuad}, we show the $\epsilon$-BNE for the VCG-nearest payment rule.
The first thing we observe in the equilibrium is that strategies are overall very close to truthful bidding. The most striking feature, however, is that the global bidders exhibit significant \emph{overbidding behavior}.
Specifically, the global bidders bid higher than their value on modestly valued bundles when their other bundle has a very high value.
At first sight, this may seem irrational. However, by overbidding the bidder shifts the VCG reference point, which can indirectly decrease his payment. For instance, suppose that bidders $G_1$, $L_3$ and $L_4$ win bundles $ABCD$, $EF$ and $GH$ respectively.
Bidder $G_1$ has an incentive to overstate the value for $EFGH$: this will cause the VCG reference point to increase for $L_3$ and $L_4$.
If the value for $ABCD$ is very high for $G_1$, then this manipulation carries little risk of accidentally winning the low-valued $EFGH$ bundle, and it decreases the expected payment for $ABCD$.
Of course, the bidder must balance the possible upside of overbidding with the potential risk of winning the lower-valued bundle at negative utility.
Thus, it is somewhat surprising to find such a strong overbidding effect in BNE. Note that the general concept of overbidding in CAs has first been identified by \citet{beck2013incentives}, who studied this phenomenon analytically in LLG for a stylized payment rule. We are the first to demonstrate that VCG-nearest, the rule most commonly used in practice, also exhibits this effect.

\paragraph{Equilibrium of first-price.}
In Figure~\ref{fig:msdeqFP}, we show the $\epsilon$-BNE for the first-price payment rule.
We observe that, in contrast to VCG-nearest, here the strategies involve significantly more shading. This is to be expected, given the incentive properties of the two rules.
{\color{black}
We also see that, for the local bidders, their strategy is to bid close to zero for one bundle in a large region of their value space (specifically, where their value for that bundle is low).
This behavior is known as free-riding. It allows a bidder to still win when other bidders make high bids on complementary bundles and obtain his bundle for low or even zero payment. If the bidder increased his own bid, the winning probability would not increase enough to justify the resulting higher payment.
While \citet{baranov2010exposure} has previously observed this effect in LLG (where each bidder only optimizes for a single allocation), it is interesting that it also shows up so clearly in the LLLLGG domain (where bidders' BNE strategies must take several possible allocations into account).}
For the global bidder, the  most striking feature of his equilibrium strategy is that, when the value for one bundle is high and the value for the other is low, the bid for the low-valued bundle is strongly shaded. This can again be explained by the structure of the LLLLGG domain, where the two bundles of interest of the global bidder are non-overlapping. Thus, in $\epsilon$-BNE, the global bidder's optimal strategy requires him to ``get out of his own way'' such that he avoids winning a low-profit bundle when instead he could be winning a much higher-profit bundle.

\bigskip
As we have shown in this section, our algorithm scales well to a domain that is significantly larger than LLG. We would like to emphasize that no prior analytical or algorithmic work has been able to solve domains of the size of LLLLGG.
We thus expect that other researchers will use this domain as a benchmark for future algorithmic work.

Our discussion of the VCG-nearest and first-price equilibria clearly demonstrates that optimizing the bidding strategy of a multi-minded bidder (as in LLLLGG) is significantly more complex than optimizing the strategy of a single-minded bidder (as in LLG). Note that we consider our experiments only as a first step towards exploring the properties of payment rules in larger domains where there are still many open questions that can be analyzed using our algorithm.

\section{Software Implementation of our BNE Algorithm}
\label{sec:software}

In this section, we describe the software implementation of our BNE algorithm, which is written in Java 8. 
In particular, we briefly explain how to use the software and then discuss its main features.
We release the software publicly under an open-source license at \texttt{https://github.com/} \texttt{marketdesignresearch/CA-BNE}. The code repository also contains three examples illustrating how to use our algorithm:
the VCG-nearest rule applied to \LLG, the first-price rule applied to \LLG (which is more challenging, since the global bidder is no longer truthful), and the first-price rule applied to LLLLGG (with a modest default configuration that can be run on a laptop).
%

\subsection{Usage}
\label{sec:Usage}

We provide example code in Listing~\ref{alg:javaContext} to show the overall structure of the code (we omit some details in the listing for the purpose of clarity).
The code is highly modular:
it consists of a series of different components, each responsible for one single aspect of the problem. Each component is implemented as a Java class, so one implementation can easily be swapped out for another. 
These components are plugged together via the \texttt{BNESolverContext} class. The overall algorithm is then managed and executed via  the \texttt{BNEAlgorithm} class.
We first create the context that holds the algorithm's configuration (Line 1).
Then we specify several of the algorithm's core elements, including pattern search\ (Line 3), Monte Carlo integration (Line 4), common random numbers (Line 5), and dampened updates (Line 6). 
Next, we specify the auction setting which consists of a 
mechanism class that specifies the allocation and payment rules (Line 8), and a sampler class that specifies the value distributions of bidders (Line 9).\footnote{Note that the sampler class provides a method for sampling from the marginal value distribution $\mathcal{V}_\smi | V_i = v_i$ for a given $v_i$. This design choice makes our Monte Carlo sampling more efficient, more so than if the valuations were implemented explicitly as a joint probability distribution.}
Then the BNE Algorithm class is initialized with the context carrying all the configuration data (Line 10).
Finally, the actual BNE algorithm is started (Line 12).

\RestyleAlgo{boxed}

\begin{listing}[tb]
\small

\texttt{BNESolverContext<Double,Double> context = new BNESolverContext<>()}\;
\texttt{\ldots}\\
\texttt{context.setOptimizer(new PatternSearch<>(...))}\;
\texttt{context.setIntegrator(new MCIntegrator<>(...))}\;
\texttt{context.setRng(2, new CommonRandomGenerator(...))}\;
\texttt{context.setUpdateRule(new UnivariateDampenedUpdateRule(...)}\;
\texttt{\ldots}\\
\texttt{context.setMechanism(new FirstPrice())}\;
\texttt{context.setSampler(new FirstPriceLLGSampler(context))}\;
\texttt{BNEAlgorithm<Double,Double> bneAlgo = new BNEAlgorithm<>(3,context)}\;
\texttt{\ldots}\\
\texttt{result = bneAlgo.run()}\;
\caption{Example code for running our BNE Algorithm. The ``...''s indicate code or function arguments that we omit for clarity.}
\label{alg:javaContext}
\end{listing}

\subsection{Main Features}

We wrote our code with performance, ease of use, and extensibility in mind. 
Note that these three goals are sometimes at odds with each other, but we have attempted to do justice to each of them, through careful software engineering.  Our code provides the following main features:

\begin{enumerate}
    \item \textbf{Flexible Bidder Models.} The value distributions and utility functions are simply implemented as classes. All possible joint value distributions are supported, even those with complex correlation structures. Nonlinear utilities (e.g., risk aversion) can easily be added. The action space is not restricted to  $\mathbb{R}_{\geq 0}^r$, as our algorithm can also handle domain-specific bidding languages, for example. 
    \item \textbf{Flexible Mechanism Specification.} It is straightforward to implement new allocation and payment rules because they are simply provided by the user as new  classes. This also allows for writing fast, specialized implementations (e.g., for domains where special-purpose algorithms for the winner determination problem are available) or the addition of features like reserve prices.
    \item \textbf{No hard-coded Assumptions.} Our BNE algorithm can also be run in settings where the assumptions we make in Section~\ref{sec:provenErrorBound} to derive the theoretical upper bound on $\epsilon$ are not satisfied. In these settings, our algorithm simply falls  back to our alternative verification method that computes an estimated $\epsilon$.
\end{enumerate}

With these features, our code is  easy to use and extend. It can thus immediately be used in exploratory research and as a basis for further algorithmic work. We hope that providing access to our source code will enable the study of mechanisms and domains that were not previously amenable to analytic or algorithmic analysis.

\section{Conclusion}
\label{sec:conclusion}

In this paper, we have presented a new algorithm for finding pure-strategy $\epsilon$-BNEs in CAs with continuous value and action spaces. The key difference between our approach and prior work is that our algorithm is separated into a search phase and a verification phase. This has enabled us to design a search phase that is highly optimized for speed while still obtaining a provable bound on $\epsilon$ in the verification phase.

To address the high dimensionality that drives the difficulty of the BNE search problem, sampling-based approaches are needed, which in turn introduces variance. Therefore, we have employed multiple variance reduction techniques in the search phase. Additionally, the key to obtain large speed improvements was to think carefully about when and where to exert computational effort (e.g., using adaptive control points). In the theory section, our key assumption was  the linearity of the utility function. This enabled us to  make the necessary convexity argument to characterize the gap between the equilibrium utility and the best response utility and to ultimately derive our theoretical bound.
Surprisingly, we were able to derive this bound without making any assumptions about the allocation or payment rules, while at first sight, one may have expected that certain assumptions (like Lipschitz-continuity) would have been necessary. For practical applications, it is important to note that our algorithm computes an upper bound that is relatively "tight," in the sense that, with sufficiently many verification points, the upper bound is very close to the true $\epsilon$.

Our BNE algorithm can be used in several ways. First, researchers can use it, e.g., for the purpose of auction design, for analyzing certain aspects of CAs, or for validating analytical results. Importantly, by enabling the algorithmic analysis of multi-minded domains like LLLLGG,  researchers can now explore many new effects that are not present in simple domains like LLG. Second, bidders can use our algorithm, e.g., to analyze the effect of different strategies or when contemplating whether to enter an auction in the first place. We hope that releasing our code under an open-source license will enable those and many other applications in the future.

Finally, our work gives rise to promising directions for future research. 
First, an important question is how to derive a bound for settings with correlation or non-linear utilities (which includes risk-averse bidders). Note that without assumptions on the mechanism, no bound for correlated settings can be derived. Thus, future work could explore which assumptions on the mechanism are necessary/sufficient to extend our theoretical bound to richer settings.
Second, in the long run, it would be interesting to develop BNE algorithms that scale to real-world-sized problems. As we have discussed, the main difficulty in scaling our algorithm lies in the richness of the bidding space. Thus, future work targeting scalability could investigate how to exploit certain structure inherent to the problem of interest (e.g., monotonicity properties of the mechanism) to focus the computational effort on the most relevant parts of the value or action spaces.

\acks{We thank the editor and the anonymous reviewers for their very helpful comments. Furthermore, we thank Enrique Areyan, Amy Greenwald, Daniel Marszalec, and Alex Teytelboym for trying early versions of our BNE algorithm code and for providing useful feedback.}


\vskip 0.2in
\bibliography{../../../CApapers}
\bibliographystyle{theapa}
\newboolean{withAppendix}
\setboolean{withAppendix}{true}

\ifthenelse{\boolean{withAppendix}}{
\newpage
\appendix


\section{Pseudocode for Adaptive Control Point Placement}
\label{app:adaptiveCP}

\SetKwFunction{pointwiseBestResponse}{PointwiseBestResponse}
\SetKwFunction{computePriority}{CurvatureBasedPriority}
\SetKwFunction{makePWL}{MakePiecewiseLinearStrategy}

Here, we describe the details of our adaptive control point placement algorithm as introduced in Section~\ref{sec:adaptiveCP}. We only describe it for the one-dimensional case (as needed for LLG) and it would have to be generalized to work in multiple dimensions. Recall that the method is based on the idea of repeatedly finding a valuation where the best response function has high curvature, such that placing an additional control point there improves how well the corresponding piecewise linear strategy approximates the best response.

The pseudocode for our adaptive control point placement method is provided in Algorithm~\ref{alg:adaptiveCP}. The input to the algorithm is a function   \pointwiseBestResponse which, for a fixed bidder $i$, takes a control point $v_i$ as its only argument and computes the bid $b_i$ that maximizes the expected utility $\eu_i(v_i, b_i)$.
This function thus encompasses all the details about the auction mechanism $\mathcal{M}$, the joint value distribution $\mathcal{V}$, and the current strategy profile $s$.

\begin{algorithm}[]
\SetAlgoLined
\DontPrintSemicolon
\SetKwInOut{KwInput}{input}
\SetKwInOut{KwOutput}{output}
\SetKwInOut{KwParameters}{parameters}

\KwInput{function \pointwiseBestResponse that computes a pointwise best response for a fixed bidder $i$, at any valuation $v_i$}
\KwParameters{initial number of control points $initialPoints$\\ maximum number of control points $maxPoints$}
\KwOutput{strategy $s_i'$}

$controlPoints$ :=  list of valuations initialized to an evenly spaced grid of size $initialPoints$\;
$bestResponses :=$ map of (valuation, pointwise best response) pairs, initially empty \;
$priorities :=$  map of (valuation, priority) pairs, initially empty \;

\textbf{foreach} $v_i \in controlPoints$: compute $b_i:=$ \pointwiseBestResponse{$v_{i}$}  and add $(v_i, b_i)$  to $bestResponses$\;
\textbf{foreach} $v_i \in controlPoints$: compute $\pi_i:=$ \computePriority{$v_{i},bestResponses$} and add $(v_i,\pi_i)$ to $priorities$ \;

\Repeat{\upshape $|controlPoints| = maxPoints$}{
    $v_i :=$ control point with highest priority in $priorities$\;
    $v_i' :=$  neighbor of $v_i$ in $controlPoints$ furthest away from $v_i$ (and in case of a tie, select the neighbor $v_i'$ where the  difference between the best responses at $v_i$ and $v_i'$ is larger)\;
$v_i'' := (v_i + v_i') / 2$\;

    add $v_i''$ to $controlPoints$ in between $v_i$ and $v_i'$ \;
    compute $b_i'':=$ \pointwiseBestResponse{$v_i''$}  and add $(v_i'',b_i'')$ to $bestResponses$\;
    \textbf{foreach} $w_{i} \in \{v_i, v_i',v_i''\}$: compute $\pi_i:=$ \computePriority{$w_{i},bestResponses$} and add/update  $(w_{i},\pi_i)$ to/in $priorities$\;

}

$s_i' :=$ \makePWL{$bestResponses$} \;

\Return $s_i'$ \;

\caption{Computing a best response for one bidder with adaptive control point placement}
\label{alg:adaptiveCP}
\end{algorithm}

Algorithm~\ref{alg:adaptiveCP} works as follows.
In Lines 1-3, we initialize the data structures that keep track of all valuations we have selected as control points, their associated pointwise best responses, and a priority score for each of them.
The initial set of control points is a coarse, evenly spaced grid.
In Line 4, we compute the pointwise best responses for the initial control points. In Line 5, we then compute a priority for each of those control points. For this, we make use of the helper function  \computePriority which assigns a priority score for a given control point based on the curvature of the best response function at that control point (see Algorithm~\ref{alg:priority}).
In each iteration of the main loop of the algorithm (Lines 6-13), we  add one new control point.
For this, we first find the control point $v_i$ with the highest priority  and then place a new control point at the midpoint between $v_i$ and one of its neighbors (Lines 7-10).
Concretely, the new control point is placed on the side where the distance to the neighboring control point is larger;
in case of a tie, we select the side where the slope of the best response is estimated to be steeper.
We then compute a pointwise best response at the new control point and recompute all priority scores that might have changed (Lines~11-12).
When we reach the maximum number of allowed control points (Line 13), we construct a piecewise linear strategy based on all the control points and their pointwise best responses (Line~14). Finally, we return the new strategy (Line 15).

\bigskip
\begin{algorithm}[H]
\SetAlgoLined
\DontPrintSemicolon
\SetKwInOut{KwInput}{input}
\SetKwInOut{KwOutput}{output}
\SetKwInOut{KwParameters}{parameters}
\SetKwInOut{KwFunction}{function}
\SetKwProg{mySub}{function}{}{}

\KwFunction{\computePriority{$v_i$, $bestResponses$}}
\KwInput{control point $v_i$, map of (valuation, bid) pairs $bestResponses$}
\KwParameters{smallest permissible interval size $minIntervalSize$}
\KwOutput{priority $\pi_i \in \mathbb{R}_{\geq0}$}

    $\underaccent{\bar}{v}_i, \bar{v}_i := $ lower and upper neighbors of $v_i$ in $bestResponses$\;
    $b_i, \underaccent{\bar}{b}_i, \bar{b}_i :=$ bids corresponding to $v_i$, $\underaccent{\bar}{v}_i$, and $\bar{v}_i$ in $bestResponses$, respectively \;
    
    \If{$\min \left( v_i - \underaccent{\bar}{v}_i, \bar{v}_i - v_i \right) \leq minIntervalSize$}{
        \Return 0\;
    }

    $lowerSlope := (b_i - \underaccent{\bar}{b}_i) / (v_i - \underaccent{\bar}{v}_i) $ \;
    $upperSlope := (\bar{b}_i - b_i) / (\bar{v}_i - v_i) $ \;
    \Return $|upperSlope - lowerSlope|$ \;


\caption{Computing the priority of a control point based on the curvature of a bidder's strategy}
\label{alg:priority}
\end{algorithm}
\bigskip

To complete the description of our method we provide pseudocode for the helper function \computePriority in Algorithm~\ref{alg:priority}. The inputs to the algorithm are the control point $v_i$  for which we compute a priority and a map of best responses for the set of current control points.\\ In Lines 1-2, we find the two neighboring control points of $v_i$ and retrieve the best responses associated with $v_i$ and its two neighbors.
If either of $v_i$'s neighbors is too close to $v_i$ then we return 0 (the lowest possible priority) to ensure that the grid is not subdivided too many times in any one region (Lines 3-5).
Otherwise, we approximate the second derivative of the best response at $v_i$ through finite differences and return the magnitude of this derivative as the priority of $v_i$ (Lines 6-8).


}{} 

\end{document}